\newtheorem{defn}{Definition}
\newtheorem{exmp}{Example}
\newtheorem{prop}{Proposition}
\newtheorem{lem}{Lemma}
\newtheorem{cor}{Corollary}
\newtheorem{claim}{Claim}
\newtheorem{rem}{Remark}
\newcommand{\Ac}{\mathcal{A}}
\newcommand{\Gc}{\mathcal{G}}
\newcommand{\FF}{\mathbb{F}}
\newcommand{\PP}{\mathbb{P}}
\newcommand{\pp}{p}
\newcommand{\qq}{q}
\newcommand{\g}[1]{|G_{#1}|}
\newcommand{\gi}[3]{ \frac{\g {{#1}{#2} {#3}}\g {#3} }   {\g {{#1} {#3}} \g {{#2}{#3}}}  }
\title[Violations of Inequalities in five Subgroups]{On Group Violations of Inequalities in five Subgroups}
\author{Nadya Markin, Eldho K.Thomas, Fr\'ed\'erique Oggier}
\address{Division of Mathematical Sciences, School of Physical and Mathematical Sciences, \newline\indent Nanyang Technological University, Singapore.} 
\email {nadyaomarkin@gmail.com,ELDHO1@e.ntu.edu.sg,frederique@ntu.edu.sg}
\keywords{group inequalities, lattice of subgroups, linear inequalities, symmetric group}
\subjclass{Primary 20D30 ;  
 Secondary 94A15  
}
\thanks{
This research was supported by Nanyang Technological University under Research Grant M58110049.
}
\begin{document}
\maketitle
\begin{abstract}
We consider ten linear rank inequalities, which always hold for ranks of vector subspaces, and look at them as group inequalities. We prove that groups of order $pq$, for $p,q$ two distinct primes, always satisfy these ten group inequalities. We give partial results for groups of order $p^2q$, and find that the symmetric group $S_4$ is the smallest group that yield violations, for two among the ten group inequalities.
\end{abstract}

%
%
\section{Introduction}

For a collection $X_1,\ldots,X_n$ of $n$ discrete random variables with joint probability distribution $P(x_1,\ldots,x_n)$, 
\[
H(X_1,\ldots,X_n)=-\sum_{x_1}\ldots\sum_{x_n}P(x_1,\ldots,x_n)\log P(x_1,\ldots,x_n)
\]
is their joint (Shannon) entropy, where the logarithm is taken in base 2 if the entropy is expressed in bits, and $P(x_1,\ldots,x_n)\log P(x_1,\ldots,x_n)$ is defined to be 0 if  $P(x_1,\ldots,x_n)=0$. For $\Ac$ a subset of $\{1,\ldots,n\}$, we write $H(X_\Ac)$ for the joint entropy $H(X_i,~i \in \Ac)$.

Let $G$ be a finite group, and $G_1,\ldots, G_n$ be subgroups of $G$. Then
\[
G_\Ac = \cap_{i\in\Ac}G_i
\]
is a subgroup obtained by intersecting $G_i$, $i \in \Ac$. Let $X$ be a random variable uniformly distributed over $G$: $P(X=g)=\frac{1}{|G|},~g \in G$.
Then the random variable $X_i=XG_i$ whose support is the $[G:G_i]$ cosets of $G_i$ in $G$ satisfies $P(X_i=gG_i)=\frac{|G_i|}{|G|}$ \cite{Y}
and 
\[
P(X_i=gG_i,~i\in\Ac)=|\cap_{i\in\Ac G_i}|/|G|.
\]
The entropy of this random variable is 
\[
H(X_{\Ac})= \log \frac{|G|}{|G_\Ac|}.
\]

In what follows, we will only consider random variables coming from a finite group $G$ and $n$ of its subgroups $G_1,\ldots,G_n$. 
Hence, to any information (in)equality (that is, (in)equality written as a function of joint Shannon entropies) expressed in terms of random variables $X_1,\ldots,X_n$ will correspond a group inequality which is a function of the subgroups $G_1,\ldots,G_n$. We will denote this correspondence (expressed without logarithm) by $\Gc$. 

For example:
\begin{itemize}
\item The conditional entropy $H(X_2|X_1)=H(X_1,X_2)-H(X_1)$
is expressed in terms of group as
\[
\Gc H(X_2|X_1) =\frac{|G|}{|G_{12}|}\frac{|G_1|}{|G|}=\frac{|G_{1}|}{|G_{12}|}.
\]
\item The mutual information $I(X_1;X_2)=H(X_1)+H(X_2)-H(X_1,X_2)$
becomes
\[
\Gc I(X_1;X_2)=\frac{|G|}{|G_1|}\frac{|G|}{|G_2|}\frac{|G_{21}|}{|G|} 
=  \frac{|G_{12}||G|}{|G_1||G_2|}.
\]
\item
The conditional mutual information $I(X_1;X_2|X_3)=H(X_1|X_3)-H(X_1|X_2,X_3)$
is similarly
\[
\Gc I(X_1;X_2|X_3) =\frac{|G_{3}|}{|G_{13}|}\frac{|G_{123}|}{|G_{23}|}.
\]
\end{itemize}

Linear inequalities satisfied by entropies of jointly distributed random variables, such as the non-negativity of the conditional mutual information 
\[
I(X_1;X_2|X_3) \geq 0, 
\]
are called {\em information inequalities}. Similarly, linear inequalities satisfied by ranks of vector subspaces are called {\em rank inequalities}.

Consider the case of $n=4$ random variables.
It is known \cite{HRSV} that any linear information inequality in $n=4$ random variables is also a linear rank inequality, but the converse is not true. The Ingleton inequality \cite{Ingleton}
\[
I(X_1;X_2) \leq I(X_1;X_2|X_3)+I(X_1;X_2|X_4)+I(X_3;X_4)
\]
always holds for ranks of subspaces, yet there are examples of random variables whose joint entropies violate the Ingleton inequality \cite{Matus}.
The idea of looking at violations coming from groups was proposed in \cite{MH}. 
By {\em a group violation}, we mean a group $G$, with $n$ subgroups $G_1,\ldots,G_n$, such that the orders of the intersections of subgroups involved do not satisfy a given inequality.
It was shown that the symmetric group $S_5$ and five of its subgroups $G_1,\ldots,G_5$ violate the group version of the Ingleton inequality 
\[
\Gc I(X_1;X_2) \leq \Gc I(X_1;X_2|X_3) \Gc I(X_1;X_2|X_4) \Gc I(X_3;X_4),
\]
equivalently given by
\begin{equation}\label{eq:ing}
|G_{12}||G_{13}||G_{14}||G_{23}||G_{24}| \leq |G_1||G_2||G_{34}||G_{123}||G_{124}|.
\end{equation}

The Ingleton inequality plays a special role for $n=4$ random variables. Together with the Shannon inequalities (a special case of information inequalities which are derived from the non-negativity of the conditional mutual information), the Ingleton inequality generates all linear rank inequalities in $n=4$ random variables.

Consider now the case of $n=5$ random variables. 
In \cite{DFZ}, it was shown that 24 inequalities, together with the Shannon inequalities, and three Ingleton-type inequalities, generate all linear rank inequalities on five variables.

In this paper, we are interested in looking at these inequalities from a group theoretic point of view.
The motivation is the same as that of \cite{MH} for four random variables, namely: finding violators of linear rank inequalities gives a way to separate the region formed by vectors composed of joint Shannon entropies from the region of vectors composed by ranks of subspaces and helps in establishing an inner bound for the former. Finding such violators is not an easy task, and group theory provides a systematic way to address the question, by characterizing groups which never yield any violation, and therefore focusing the search on families of groups, where it becomes easier to exhibit violators.
More about applications of group theory to vectors containing Shannon entropy can be found in a recent survey \cite{isita}.

Among the inequalities of \cite{DFZ}, the three Ingleton-type inequalities have already been studied in \cite{Allerton}, where it was shown that a group does not violate the Ingleton inequality (\ref{eq:ing}) if and only if it does not violate any of the three Ingleton-type inequalities in 5 random variables.
We thus focus on the first 10 out of the 24 inequalities of \cite{DFZ} (since they share some commonalities, as described below).

In Section \ref{sec:ten}, we prove some general properties about the group inequalities corresponding to the first 10 inequalities of \cite{DFZ}, that we refer to as {\em DFZ inequalities} throughout the paper. 

In Section \ref{sec:pq}, we prove that groups of order $pq$, $p,q$ two distinct primes, never violate any of the DFZ inequalities. 

Partial results are given in Section \ref{sec:p2q}, namely we show that groups $G$ of order $p^2q$ and subgroups $(G_1,G_2,G_3,G_4,G_5)$ do not violate the ten DFZ inequalities as long as $|G_1|=|G_2|=p$ does not hold. The case $|G_1|=|G_2|=p$ remains open, and we discuss why it is more difficult than the other ones.

Violators are discussed in Section \ref{sec:viol}, where we show that the smallest violator of two of the ten DFZ inequalities is the symmetric group $S_4$. We also show that it is not possible to find one group which simultaneously violate all the 10 DFZ inequalities.

%
%
%
\section{Ten Group Inequalities}
\label{sec:ten}

Recall that for $n=5$ random variables, it was shown in \cite{DFZ} that 24 inequalities, together with the Shannon inequalities, and three Ingleton-type inequalities, generate all linear rank inequalities on 5 variables.
The first 10 inequalities are
\begin{eqnarray}
I(X_1;X_2) &\leq& I(X_1;X_2|X_3)+I(X_1;X_2|X_4)+I(X_3;X_4|X_5) \nonumber \\
           &    & +I(X_1;X_5)~\label{eq:1} \\
I(X_1;X_2) &\leq& I(X_1;X_2|X_3)+I(X_1;X_3|X_4)+I(X_1;X_4|X_5) \nonumber  \\
           &    & +I(X_2;X_5) \label{eq:2}\\
I(X_1;X_2) &\leq& I(X_1;X_3)+I(X_1;X_2|X_4)+I(X_2;X_5|X_3)+ \nonumber \\
           &    & I(X_1;X_4|X_3,X_5)\label{eq:3}\\
I(X_1;X_2) &\leq& I(X_1;X_3)+I(X_1;X_2|X_4,X_5)+I(X_2;X_4|X_3) \nonumber \\
           &    & +I(X_1;X_5|X_3,X_4)\label{eq:4}\\
I(X_1;X_2) &\leq& I(X_1;X_3) +I(X_2;X_4| X_3)+I(X_1;X_5| X_4) \nonumber \\
           &    &+I(X_1;X_2| X_3,X_5)+I(X_2;X_3| X_4,X_5)\label{eq:5}\\
I(X_1;X_2) &\leq&I(X_1;X_3) +I(X_2;X_4| X_5)+I(X_4;X_5| X_3) \nonumber \\
           &    & +I(X_1;X_2| X_3,X_4)+I(X_1;X_3| X_4,X_5)\label{eq:6}\\
I(X_1;X_2)   &\leq &I(X_2;X_4) +I(X_1;X_3| X_4)+I(X_1;X_5| X_3) \nonumber\\
           &    &  +I(X_2;X_4| X_3,X_5)+I(X_1;X_2| X_4,X_5) \label{eq:7}
\end{eqnarray}
\begin{eqnarray}
2I(X_1;X_2)  &\leq &I(X_3;X_4) +I(X_3,X_4;X_5)+I(X_1;X_2| X_3) \nonumber \\
           &    & +I(X_1;X_2| X_4)+I(X_1;X_2| X_5) \label{eq:8} \\
2I(X_1;X_2)&\leq &I(X_1;X_3) +I(X_4;X_5)+I(X_1;X_2| X_4) \nonumber\\
         &    & +I(X_1;X_2| X_5)+I(X_2;X_4, X_5|X_3) \label{eq:9} \\
2I(X_1;X_2)&\leq &I(X_3;X_4) +I(X_1;X_5)+I(X_1;X_2| X_3) \nonumber \\
        &     & +I(X_1;X_2| X_4)+I(X_2;X_4| X_5)+I(X_1;X_3| X_4,X_5) \label{eq:10}
\end{eqnarray}

We choose to look at these 10 DFZ inequalities because they have in common that if there exists a random variable $Z$, called common information,
such that
\[
H(Z|X_1)=0,~H(Z|X_2)=0,~H(Z)=I(X_1;X_2),
\]
then they are deduced from Shannon inequalities \cite{DFZ}.

The other 14 inequalities are also deduced
from Shannon inequalities through the concept of common information, but it takes different expressions.

Repeating the computations of the previous section, inequalities (\ref{eq:1})-(\ref{eq:10}) have a corresponding group theoretic formulation, obtained via the correspondence $\Gc$:
\begin{eqnarray*}
\Gc I(X_1;X_2,X_3) &=& \frac{|G_{123}||G|}{|G_1||G_{23}|}, \\
\Gc I(X_1,X_2;X_3,X_4) &=& \frac{|G_{1234}||G|}{|G_{12}||G_{34}|},\\
\Gc I(X_1;X_2|X_3) &=& \frac{|G_{123}||G_3|}{|G_{13}||G_{23}|},\\
\Gc I(X_1;X_2|X_3, X_4) &=& \frac{|G_{1234}||G_{34|}}{|G_{134}||G_{234}|}
\end{eqnarray*}
yielding  accordingly 10 groups inequalities, which we will freely refer to as group inequalities (\ref{eq:1})-(\ref{eq:10}).

The notion of common information is also expressible in the language of groups:
\begin{lem}
\label{lem:ci}
If $G_1G_2$ is a subgroup of $G$, then the group inequalities (\ref{eq:1})-(\ref{eq:10}) hold.
\end{lem}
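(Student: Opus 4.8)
The plan is to exploit the common-information characterisation recalled just before the statement: if a random variable $Z$ with $H(Z\mid X_1)=0$, $H(Z\mid X_2)=0$ and $H(Z)=I(X_1;X_2)$ exists, then \cite{DFZ} deduces inequalities (\ref{eq:1})--(\ref{eq:10}) from Shannon inequalities. So it suffices to produce such a $Z$ within the group model. Writing $H:=G_1G_2$, which is a subgroup of $G$ by hypothesis, I would take $Z$ to be the coset random variable $Z=XH$, where $X$ is the uniform variable on $G$ that already generates $X_1,\dots,X_5$. Since $Z$ is a function of the same $X$, it is automatically jointly distributed with $X_1,\dots,X_5$, so the Shannon inequalities apply to the augmented system $(X_1,\dots,X_5,Z)$.

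First I would check the two conditional-entropy conditions. Because $G_1\subseteq H$ and $G_2\subseteq H$, every coset $xG_1$ (resp.\ $xG_2$) satisfies $xG_1\subseteq xH$ (resp.\ $xG_2\subseteq xH$); equivalently, the partition of $G$ into $H$-cosets is coarser than the partitions into $G_1$-cosets and into $G_2$-cosets. Hence knowing $XG_1$ determines $XH$, and knowing $XG_2$ likewise determines $XH$, giving $H(Z\mid X_1)=0$ and $H(Z\mid X_2)=0$.

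Next I would verify $H(Z)=I(X_1;X_2)$. Using $H(X_\Ac)=\log|G|/|G_\Ac|$ one has $I(X_1;X_2)=\log\bigl(|G|\,|G_{12}|/(|G_1|\,|G_2|)\bigr)$, while $H(Z)=\log|G|/|H|$. The two agree precisely because of the subgroup product formula $|G_1G_2|=|G_1|\,|G_2|/|G_1\cap G_2|$, which identifies $|H|$ with $|G_1|\,|G_2|/|G_{12}|$. This is the only computation in the argument, and it is a one-liner once the product formula is invoked.

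With all three conditions in hand, the cited reduction expresses the entropy versions of (\ref{eq:1})--(\ref{eq:10}) as consequences of Shannon inequalities for $(X_1,\dots,X_5,Z)$; since Shannon inequalities hold for genuine entropies, these entropy inequalities hold for our group-generated variables, and translating back through $\Gc$ yields the group inequalities. I do not expect a serious obstacle here: the content is entirely in recognising $Z=X(G_1G_2)$ as the right candidate, and the subgroup hypothesis is exactly what guarantees that $X(G_1G_2)$ is a well-defined coset variable that is simultaneously a function of both $X_1$ and $X_2$. The point most worth stating carefully is this double functional dependence, since it is where the hypothesis is used; everything else is bookkeeping.
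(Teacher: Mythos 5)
Your proposal is correct and follows essentially the same route as the paper: both take the common-information variable $Z$ to be the coset variable of the subgroup $G_Z=G_1G_2$, verify $H(Z|X_1)=H(Z|X_2)=0$ from the inclusions $G_1,G_2<G_1G_2$ and $H(Z)=I(X_1;X_2)$ from the product formula $|G_1G_2|=|G_1||G_2|/|G_{12}|$, and then invoke the DFZ reduction to Shannon inequalities. Your write-up is merely a bit more explicit about why the coset partition argument gives the two vanishing conditional entropies.
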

\begin{proof}
We translate the above results on common information:
\[
H(Z|X_1)=0 \iff \log \frac{|G|}{|G_1|}=\log \frac{|G|}{|G_Z \cap G_1|}
\iff G_1 < G_Z,
\]
similarly $H(Z|X_2)=0 \iff G_2 < G_Z$ and
\begin{eqnarray*}
H(Z)=I(X_1;X_2)
&\iff& \log \frac{|G|}{|G_Z|}=
\log \frac{|G|}{|G_1|}+\log \frac{|G|}{|G_2|}-\log \frac{|G|}{|G_{12}|}\\
&\iff&
|G_Z|=\frac{|G_1||G_2|}{|G_{12}|}.
\end{eqnarray*}

If $G_1G_2$ is a subgroup of $G$, take $G_Z=G_1G_2$, then
\[
|G_1G_2|=\frac{|G_1||G_2|}{|G_{12}|},~G_1 < G_1G_2,~G_2 < G_1G_2
\]
which shows the existence of common information and concludes the proof.
\end{proof}

\begin{cor}
\label{cor:ci}
The above 10 DFZ inequalities hold in the following cases:
\begin{enumerate}
\item
$G_1 < G_2$, or $G_2 < G_1$,
\item
$G_1$ or $G_2$ is normal in $G$.
\item
$G$ is abelian.
\end{enumerate}
\end{cor}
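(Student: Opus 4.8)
The plan is to derive each of the three cases as a consequence of Lemma~\ref{lem:ci}, so the entire argument reduces to verifying that, in each situation, the product set $G_1G_2$ is actually a subgroup of $G$. Once that is established, the conclusion is immediate from the lemma. So the real content is three short group-theoretic observations about when $G_1G_2$ is closed under the group operation.

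For case (1), if $G_1 < G_2$ then $G_1G_2 = G_2$, which is trivially a subgroup; symmetrically, if $G_2 < G_1$ then $G_1G_2 = G_1$. Either way $G_1G_2$ is a subgroup, so Lemma~\ref{lem:ci} applies. For case (2), I would invoke the standard fact that if either of two subgroups is normal in $G$, then their product is a subgroup: if $G_1 \trianglelefteq G$, then for any $g_1 \in G_1$ and $g_2 \in G_2$ we have $g_2 g_1 = (g_2 g_1 g_2^{-1}) g_2 \in G_1 G_2$, which shows $G_2 G_1 \subseteq G_1 G_2$, and the reverse inclusion is symmetric, giving $G_1 G_2 = G_2 G_1$; a product of subgroups is a subgroup precisely when the two commute as sets in this way. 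The same argument works verbatim if instead $G_2 \trianglelefteq G$. For case (3), if $G$ is abelian then every subgroup is normal, so this is simply a special instance of case (2); alternatively, one observes directly that in an abelian group $G_1 G_2 = G_2 G_1$ automatically, so $G_1 G_2$ is a subgroup.

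In each of the three cases the product $G_1G_2$ is thus shown to be a subgroup of $G$, and Lemma~\ref{lem:ci} then yields all ten DFZ inequalities. I do not expect any genuine obstacle here: the only ingredient beyond the lemma is the elementary criterion that $G_1G_2$ is a subgroup whenever $G_1G_2 = G_2G_1$, together with the observation that normality of either factor (and in particular abelianness) forces this commuting condition. The mild care to take is simply to note that cases (2) and (3) are nested as special cases, so the write-up can treat (2) in full and dispatch (3) as an immediate corollary of it.
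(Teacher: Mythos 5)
Your proposal is correct and matches the paper's intent exactly: the paper states this corollary immediately after Lemma~\ref{lem:ci} with no separate proof, precisely because in each of the three cases $G_1G_2$ is a subgroup (by containment, by normality of one factor, or because abelianness makes every subgroup normal), so the lemma applies directly. Your verification of the three cases, including noting that (3) is a special instance of (2), is the standard argument the paper leaves implicit.
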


We will use the following lemma and corollary to prove many inequalities.
\begin{lem}
Let $G$ be a finite group with $n$ subgroups $G_1,\ldots,G_n$.
For any choice of subsets $\Ac_2,\Ac_3,\Ac_4$ of $\{1,\ldots,n\}$
\[
|G_{\Ac_2\cup\Ac_3}||G_{\Ac_2\cup\Ac_4}|
\leq |G_{\Ac_2}||G_{\Ac_2\cup\Ac_3\cup\Ac_4}|
\leq
|G_{\Ac_2}||G_{\Ac_3\cup\Ac_4}|
\]
is always satisfied.

Moreover, if one of $G_{\Ac_2\cup\Ac_3}, G_{\Ac_2\cup\Ac_4}$ is normal, then 
\[
\frac{|G_{\Ac_2\cup\Ac_3}||G_{\Ac_2\cup\Ac_4}|}{|G_{\Ac_2\cup\Ac_3\cup\Ac_4}|}
\mid
|G_{\Ac_2}|.
\]

\label{lem:lemma234}
\end{lem}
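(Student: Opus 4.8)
The plan is to reduce the whole statement to three auxiliary subgroups and then invoke the product-order formula together with Lagrange's theorem. Writing $A = G_{\Ac_2}$, $B = G_{\Ac_3}$, $C = G_{\Ac_4}$, and using that an intersection indexed by a union of sets is the intersection of the pieces, one has $G_{\Ac_2\cup\Ac_3}=A\cap B$, $G_{\Ac_2\cup\Ac_4}=A\cap C$, $G_{\Ac_2\cup\Ac_3\cup\Ac_4}=A\cap B\cap C$, and $G_{\Ac_3\cup\Ac_4}=B\cap C$. Since $A,B,C$ are subgroups of $G$ (each being an intersection of subgroups), the claim becomes the purely group-theoretic chain $|A\cap B||A\cap C|\le|A||A\cap B\cap C|\le|A||B\cap C|$.

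For the left inequality I would divide through by $|A\cap B\cap C|=|(A\cap B)\cap(A\cap C)|$ and recognise the standard product-set cardinality identity
\[
\frac{|A\cap B||A\cap C|}{|(A\cap B)\cap(A\cap C)|}=|(A\cap B)(A\cap C)|.
\]
Because $A\cap B$ and $A\cap C$ both lie inside the subgroup $A$, their product set is contained in $A$, so its cardinality is at most $|A|$; this is precisely the claimed bound. The right inequality is even more immediate: after cancelling the common factor $|A|$ it reduces to $|A\cap B\cap C|\le|B\cap C|$, which holds simply because $A\cap B\cap C\subseteq B\cap C$.

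For the ``moreover'' statement, note that the displayed ratio is again equal to $|(A\cap B)(A\cap C)|$ by the same product-set formula. The role of the normality hypothesis is to upgrade this product set to a genuine subgroup: if, say, $A\cap B=G_{\Ac_2\cup\Ac_3}$ is normal in $G$, then it is \emph{a fortiori} normal in $A$, and the product of a normal subgroup of $A$ with any subgroup of $A$ is again a subgroup of $A$. Lagrange's theorem then forces its order to divide $|A|=|G_{\Ac_2}|$, which is exactly the asserted divisibility.

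I expect no serious obstacle here: the entire lemma is an unwinding of the index-set notation followed by the product-order identity, and the only group-theoretic input beyond Lagrange is the elementary fact that a normal subgroup's product with any subgroup is a subgroup. The one point to watch is that ``normal'' must be interpreted relative to a group in which the product is formed; normality in $G$ comfortably suffices, since it descends to normality in the smaller subgroup $A$ where the product $(A\cap B)(A\cap C)$ actually lives.
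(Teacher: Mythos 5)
Your proof is correct and follows essentially the same route as the paper's: both rest on the product-set cardinality formula $|HK| = |H||K|/|H\cap K|$, the observation that the product set $G_{\Ac_2\cup\Ac_3}G_{\Ac_2\cup\Ac_4}$ lies inside $G_{\Ac_2}$, and, for the divisibility claim, the fact that normality of one factor turns this product set into a subgroup of $G_{\Ac_2}$, so Lagrange applies. The only differences are cosmetic: you spell out the relabelling $A,B,C$, the identification $(A\cap B)\cap(A\cap C)=A\cap B\cap C$, and the containment argument for the right-hand inequality, all of which the paper leaves implicit.
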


\begin{proof}
We use the fact that the subset $G_{\Ac_2\cup\Ac_3}G_{\Ac_2\cup\Ac_4} \leq G_{\Ac_2}$ has cardinality
$\frac{|G_{\Ac_2\cup\Ac_3}|  |G_{\Ac_2\cup\Ac_4 }  |}{|G_{\Ac_2\cup\Ac_3\cup\Ac_4} | }$. 
Moreover when one of $G_{\Ac_2\cup\Ac_3}, G_{\Ac_2\cup\Ac_4}$ is normal, then the subset  
$G_{\Ac_2\cup\Ac_3}G_{\Ac_2\cup\Ac_4} $ is actually a subgroup  of $ G_{\Ac_2}$, hence its size divides 
$| G_{\Ac_2}|$.

 \end{proof}

Relating the lemma above to information quantities $\gi a b c$ gives the following immediate corollary. 
\begin{cor}{\label{cor:abcBounds}}
Consider subgroups $G_a, G_b, G_c, G_d \leq G$ with corresponding random variables $X_a, X_b, X_c, X_d$. 
The following hold: 
\[1 \leq \gi a b c = \Gc I(X_a; X_b| X_c),\]
\[1 \leq \gi a b {cd} = \Gc I(X_a; X_b| X_c, X_d).\]
Moreover, if either $G_{ac} \unlhd G_c$ or $G_{bc} \unlhd G_c$ then 

$\gi a b c$ and $\gi a b {cd}$ are both integers.  

\end{cor}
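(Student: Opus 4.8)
The plan is to read off both parts directly from Lemma \ref{lem:lemma234}, after matching the two expressions to the pattern appearing there. Recall that $\gi a b c = \frac{|G_{abc}||G_c|}{|G_{ac}||G_{bc}|}$ and $\gi a b {cd} = \frac{|G_{abcd}||G_{cd}|}{|G_{acd}||G_{bcd}|}$. Each of these has the shape $\frac{|G_{\Ac_2}||G_{\Ac_2\cup\Ac_3\cup\Ac_4}|}{|G_{\Ac_2\cup\Ac_3}||G_{\Ac_2\cup\Ac_4}|}$ for a suitable choice of index sets $\Ac_2,\Ac_3,\Ac_4$, so the corollary should follow from the lemma essentially by substitution.

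First I would establish the two inequalities. For $\gi a b c$ I take $\Ac_2=\{c\}$, $\Ac_3=\{a\}$, $\Ac_4=\{b\}$, so that $G_{\Ac_2\cup\Ac_3}=G_{ac}$, $G_{\Ac_2\cup\Ac_4}=G_{bc}$, $G_{\Ac_2}=G_c$ and $G_{\Ac_2\cup\Ac_3\cup\Ac_4}=G_{abc}$. The leftmost inequality of Lemma \ref{lem:lemma234} then reads $|G_{ac}||G_{bc}|\le |G_c||G_{abc}|$, which is exactly $\gi a b c \ge 1$. For $\gi a b {cd}$ I instead take $\Ac_2=\{c,d\}$, $\Ac_3=\{a\}$, $\Ac_4=\{b\}$, giving $G_{\Ac_2\cup\Ac_3}=G_{acd}$, $G_{\Ac_2\cup\Ac_4}=G_{bcd}$, $G_{\Ac_2}=G_{cd}$ and $G_{\Ac_2\cup\Ac_3\cup\Ac_4}=G_{abcd}$, and the same inequality yields $\gi a b {cd}\ge 1$. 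The identifications with $\Gc I(X_a;X_b|X_c)$ and $\Gc I(X_a;X_b|X_c,X_d)$ are just the translations of $\Gc$ recorded earlier in this section.

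For the integrality statements I would use the \emph{moreover} clause of Lemma \ref{lem:lemma234}, which gives $\frac{|G_{\Ac_2\cup\Ac_3}||G_{\Ac_2\cup\Ac_4}|}{|G_{\Ac_2\cup\Ac_3\cup\Ac_4}|}\mid |G_{\Ac_2}|$ whenever one of $G_{\Ac_2\cup\Ac_3},G_{\Ac_2\cup\Ac_4}$ is normal in $G_{\Ac_2}$. With the first choice of index sets this says $\frac{|G_{ac}||G_{bc}|}{|G_{abc}|}\mid |G_c|$, and since $\gi a b c$ equals $|G_c|$ divided by this divisor, $\gi a b c\in\mathbb{Z}$ as soon as $G_{ac}\unlhd G_c$ or $G_{bc}\unlhd G_c$. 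The single-conditioning case is thus immediate.

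The one genuine step — and the main obstacle — is the double-conditioning case, because the hypothesis is stated as normality inside $G_c$, whereas the moreover clause (with $\Ac_2=\{c,d\}$) requires normality inside $G_{cd}$. Here I would invoke the standard fact that if $N\unlhd K$ and $H\le K$ then $N\cap H\unlhd H$. Applying it with $N=G_{ac}\unlhd G_c=K$ and $H=G_{cd}\le G_c$, and using $G_{ac}\cap G_{cd}=G_{acd}$, shows that $G_{ac}\unlhd G_c$ forces $G_{acd}\unlhd G_{cd}$; symmetrically $G_{bc}\unlhd G_c$ forces $G_{bcd}\unlhd G_{cd}$. In either case the moreover clause applies with $\Ac_2=\{c,d\}$, giving $\frac{|G_{acd}||G_{bcd}|}{|G_{abcd}|}\mid |G_{cd}|$, so that $\gi a b {cd}$, which equals $|G_{cd}|$ divided by this quantity, is an integer. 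This normality-descent observation is the only place where more than a direct substitution into the lemma is required.
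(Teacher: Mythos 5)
Your proof is correct and follows essentially the same route as the paper: both inequalities and the integrality claims are obtained by substituting $\Ac_2=\{c\}$ (resp.\ $\Ac_2=\{c,d\}$), $\Ac_3=\{a\}$, $\Ac_4=\{b\}$ into Lemma \ref{lem:lemma234}, exactly as the paper does. The only difference is that where the paper dismisses the integrality of $\gi a b {cd}$ with ``similarly a direct consequence of the lemma,'' you explicitly supply the needed normality-descent step ($G_{ac}\unlhd G_c$ and $G_{cd}\le G_c$ imply $G_{acd}=G_{ac}\cap G_{cd}\unlhd G_{cd}$), which is a correct and welcome filling-in of a detail the paper leaves implicit.
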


\begin{proof}
We apply the lemma above for subsets $\Ac_2 = \{c\}$, $\Ac_3= \{a\}$, $\Ac_4 = \{b\}$ to show the first inequality. 
We apply the lemma above for subsets $\Ac_2 = \{cd\}$, $\Ac_3= \{a\}$, $\Ac_4 = \{b\}$ to show the second inequality. 
The last statement is similarly a direct consequence of the lemma. 

\end{proof}

Next we  show how uniqueness of a Sylow $q$-subgroup imposes constraints on quantities $\gi a b c$. 
We will use the valuation $v_q$ at a prime $q$. 
\begin{lem} \label{valt}
Let $G$ be a group whose Sylow $q$-subgroup is normal and abelian. 
For any two subgroups $A,B \leq G$, let $AB = \{ab : a \in A, b \in B\}$ denote the set-product of $A, B$.
Then $$v_{\qq}(|G|)  \geq v_{\qq}({|AB|}).$$

\end{lem}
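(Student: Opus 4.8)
The plan is to reduce the claim about $v_q(|AB|)$ to a statement living entirely inside the Sylow $q$-subgroup $P$ of $G$, which by hypothesis is normal (hence the unique Sylow $q$-subgroup). First I would recall the product formula $|AB| = |A|\,|B|/|A\cap B|$, valid for the set-product of any two subgroups whether or not $AB$ is itself a subgroup. Taking $q$-adic valuations gives
\[
v_q(|AB|) = v_q(|A|) + v_q(|B|) - v_q(|A \cap B|),
\]
so it suffices to control the $q$-parts of $|A|$, $|B|$ and $|A\cap B|$ separately.

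The key observation is that normality of $P$ lets me extract these $q$-parts by intersecting with $P$. For any subgroup $H \leq G$, since $P \unlhd G$ the product $HP$ is a subgroup and $HP/P \cong H/(H \cap P)$ embeds into $G/P$. As $|G/P|$ is coprime to $q$, the index $[H : H\cap P]$ is coprime to $q$, so $H\cap P$ is a Sylow $q$-subgroup of $H$ and
\[
v_q(|H|) = v_q(|H \cap P|).
\]
Applying this with $H = A$, $H = B$, and $H = A\cap B$ (noting $(A\cap B)\cap P = (A\cap P)\cap(B\cap P)$), I can replace $A$ and $B$ throughout by the $q$-subgroups $A' = A\cap P$ and $B' = B\cap P$ of $P$, so that $v_q(|AB|) = v_q(|A'|) + v_q(|B'|) - v_q(|A'\cap B'|)$.

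It then remains to bound this last quantity, which equals $v_q(|A'B'|)$ by the product formula again. Here the structure of $P$ enters: since $A',B' \leq P$ and $P$ is abelian, $A'B'$ is genuinely a subgroup of $P$, so $|A'B'|$ divides $|P| = q^{\,v_q(|G|)}$. Combining the displays yields
\[
v_q(|AB|) = v_q(|A'B'|) \leq v_q(|P|) = v_q(|G|),
\]
as required.

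I do not expect a serious obstacle, since the argument is bookkeeping once the two ingredients—the product formula and the identity $v_q(|H|) = v_q(|H\cap P|)$—are in place. The one point requiring care is verifying that $H\cap P$ really is a Sylow $q$-subgroup of $H$, which is the load-bearing step and is precisely where normality of $P$ is used. I would note that the abelian hypothesis plays only a minor role, making $A'B'$ an honest subgroup; one could instead argue directly that $A'B' \subseteq P$ forces $|A'B'| \leq |P|$, so the conclusion holds under normality alone.
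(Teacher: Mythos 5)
Your proof is correct and follows essentially the same route as the paper's: apply the product formula $|AB|=|A||B|/|A\cap B|$, reduce all $q$-adic valuations to the Sylow $q$-subgroups $A'=A\cap P$, $B'=B\cap P$, $(A\cap B)\cap P$, and bound $v_q(|A'B'|)$ by $v_q(|P|)$ using that $A'B'$ sits inside $P$. Your write-up is in fact slightly more careful than the paper's (which leaves implicit the identification of the Sylow $q$-subgroup of a subgroup $H$ with $H\cap P$, needed to equate $v_q(|(A\cap B)_q|)$ with $v_q(|A'\cap B'|)$), and your closing remark is accurate: since $A'B'\subseteq P$ and $|A'B'|$ is a power of $q$, the conclusion holds under normality of $P$ alone, with the abelian hypothesis only upgrading the containment to an actual subgroup.
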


\begin{proof}
Let $G_{\qq}$ (resp. $A_{\qq}$, $B_{\qq}$, $(A\cap B)_{\qq}$) denote the Sylow $q$-subgroup of $G$  (resp. $A$, $B$, $A\cap B$). 
Then we have 
\[
v_{\qq}(|AB|) = 
v_{\qq}(|A|)+v_{\qq}(|B|)-v_{\qq}(|A\cap B|) = \]
\[
v_{\qq}(|A_{\qq}|)+v_{\qq}(|B_{\qq} | )-v_{\qq}( |(A\cap B)_{\qq}|) 
=
v_{\qq}(|A_{\qq}B_{\qq}|)
\leq v_{\qq}(|G_{\qq}|).
\]
To see the last inequality note first that $A_q$ and $B_q$ are both contained in $G_{\qq}$, since $G_{\qq}$ is normal (or equivalently unique). Moreover, since $G_{\qq}$  is by assumption abelian,
$A_{\qq}B_{\qq}$ in fact forms a subgroup, not just a subset of $G_{\qq}$. Hence we have 
$|A_{\qq}B_{\qq}| $ divides $ |G_{\qq}|. $
\end{proof}
As a corollary we get the following proposition.

\begin{prop}\label{valq}
Let $G$ be a group whose Sylow $q$-subgroup is normal and abelian. 
Then $v_{\qq}(i) \geq 0$ for any $i = \gi a b  c. $ 
\end{prop}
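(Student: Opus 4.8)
The plan is to reduce the statement to a single application of Lemma \ref{valt}. First I would rewrite the quantity $i = \gi a b c$ in terms of a set-product. Since $G_{ac}$ and $G_{bc}$ are both subgroups of $G_c$ and satisfy $G_{ac}\cap G_{bc} = G_{abc}$, the set-product $G_{ac}G_{bc} \subseteq G_c$ has cardinality $|G_{ac}||G_{bc}|/|G_{abc}|$ (the standard counting identity already used in the proof of Lemma \ref{lem:lemma234}). Consequently
\[
i = \gi a b c = \frac{|G_c|}{|G_{ac}G_{bc}|},
\]
so that the assertion $v_q(i) \geq 0$ is equivalent to the inequality $v_q(|G_c|) \geq v_q(|G_{ac}G_{bc}|)$.

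This last inequality is exactly what Lemma \ref{valt} delivers, but applied to the group $G_c$ in place of $G$, with $A = G_{ac}$ and $B = G_{bc}$. The one thing I must check before invoking it is that $G_c$ itself meets the hypotheses of the lemma, namely that the Sylow $q$-subgroup of $G_c$ is normal and abelian. This is the only real content of the proof. Writing $G_q$ for the unique (normal, abelian) Sylow $q$-subgroup of $G$, I would show that for every subgroup $H \leq G$ the intersection $H \cap G_q$ is the Sylow $q$-subgroup of $H$: it is a $q$-group contained in $G_q$; it is normal in $H$, being the intersection of $H$ with a normal subgroup; and the index computation $[H : H\cap G_q] = [HG_q : G_q]$, which divides $[G:G_q]$ and is therefore prime to $q$, shows that $H \cap G_q$ already carries the full $q$-part of $|H|$. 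Being a subgroup of the abelian group $G_q$, it is abelian. Specializing to $H = G_c$ gives the required inheritance.

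With the hypotheses verified, Lemma \ref{valt} applied inside $G_c$ yields $v_q(|G_c|) \geq v_q(|G_{ac}G_{bc}|)$, which by the reformulation above is precisely $v_q(i) \geq 0$. I expect the main, and essentially only, obstacle to be the verification that the normal-abelian-Sylow condition passes to subgroups; everything else is bookkeeping. It is worth noting why this step cannot be skipped: applying Lemma \ref{valt} directly to the ambient group $G$ would only give the weaker bound $v_q(|G|) \geq v_q(|G_{ac}G_{bc}|)$, with $|G|$ rather than $|G_c|$ on the left, and this does not suffice to conclude $v_q(i)\geq 0$. Passing to $G_c$ is exactly what sharpens the estimate to the needed form.
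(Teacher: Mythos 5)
Your proof is correct and takes essentially the same route as the paper: the paper's own proof is exactly the application of Lemma \ref{valt} with $G = G_c$, $A = G_{ac}$, $B = G_{bc}$, giving $v_q(i) = v_q(|G_c|) - v_q(|G_{ac}G_{bc}|) \geq 0$. The one difference is that you explicitly verify that the normal-abelian-Sylow hypothesis is inherited by the subgroup $G_c$ (by showing $G_c \cap G_q$ is a normal, abelian Sylow $q$-subgroup of $G_c$), a step the paper invokes implicitly without justification, so your write-up is in fact the more complete of the two.
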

\begin{proof}
Apply previous Lemma with $G = G_c$, $A = G_{ac}, B = G_{bc}$. 
Then 
$$v_{\qq}(i)  = v_{\qq}(\gi a b  c) = 
v_{\qq}(|G_c|) - v_{\qq}( |G_{ac}G_{bc}| ) \geq 0.
$$

\end{proof}

\begin{rem}
\label{rem:uniqueSylow}
The fact that there is only one (normal) Sylow $q$-subgroup restricts powers of $q$ in expression $\gi a b c$ to positive ones. This eventually allows us to prove our results for groups of order $pq$.
\end{rem}

%
%
\section{Groups of Order $pq$}
\label{sec:pq}

In this section we prove that information inequalities (\ref{eq:1})-(\ref{eq:10}) hold for groups of order $pq$
. 

First we make a few observations about these groups. 
The property described in the lemma below was used to prove inequalities (\ref{eq:1})-(\ref{eq:2}) in \cite{Allerton}.
\begin{lem} 
A group $G$ has the property that all its distinct proper subgroups have trivial intersection if and only if $|G|= pq$ for two primes $p,q$ not necessarily distinct. 
\end{lem}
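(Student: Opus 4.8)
The plan is to prove the two implications separately; the forward direction is routine, while all the content lies in the converse.

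For the forward direction I would assume $|G|=pq$ with $p,q$ prime (allowing $p=q$) and invoke Lagrange: every nontrivial proper subgroup then has prime order, namely $p$ or $q$ (and only $p$ when $p=q$, since a subgroup of order $p^2$ would be all of $G$). Given two distinct such subgroups $H\neq K$, the order of $H\cap K$ divides both $|H|$ and $|K|$. If $|H|=|K|$ then $H\cap K$ is a proper subgroup of a group of prime order, hence trivial; if $|H|\neq|K|$ then $|H\cap K|$ divides two distinct primes, hence equals $1$. Either way $H\cap K=\{e\}$, so the property holds.

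For the converse, assume any two distinct proper subgroups meet trivially. The first step is to show that every nontrivial proper subgroup $H$ is cyclic of prime order. I would pick $e\neq h\in H$: then $\langle h\rangle\leq H$, and if $\langle h\rangle\neq H$ these are distinct proper subgroups meeting in the nontrivial group $\langle h\rangle$, a contradiction, so $H=\langle h\rangle$ is cyclic; and if $|H|$ were composite, $H$ would contain a proper nontrivial subgroup $M$ with $M\cap H=M\neq\{e\}$, again a contradiction. The second step records that each such $H$ is also \emph{maximal}: any proper subgroup $K\supseteq H$ has prime order divisible by $|H|$, forcing $K=H$. Thus every nontrivial proper subgroup is simultaneously minimal and maximal in the subgroup lattice.

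To finish, I would split on whether $G$ possesses a nontrivial proper normal subgroup $N$. If it does, then $|N|$ is prime and $N$ is maximal, so by the correspondence theorem $G/N$ has no nontrivial proper subgroup and is therefore cyclic of prime order $q$; hence $|G|=|N|\cdot q=pq$. The hard part will be the remaining case, in which $G$ is simple. Here abelian $G$ forces $|G|$ prime, so the real obstacle is excluding a non-abelian simple $G$. For such $G$, every Sylow subgroup is proper (a non-abelian simple group is not a $p$-group, since a $p$-group has nontrivial centre) and of prime order, so $|G|$ is squarefree, and by maximality each Sylow subgroup is either normal---impossible by simplicity---or self-normalizing. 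Choosing two distinct primes $p\neq q$ dividing $|G|$, self-normalization yields $|G|/p$ Sylow $p$-subgroups and $|G|/q$ Sylow $q$-subgroups, all meeting pairwise trivially, so $G$ would contain at least
\[
\Bigl(|G|-\tfrac{|G|}{p}\Bigr)+\Bigl(|G|-\tfrac{|G|}{q}\Bigr)+1
\]
distinct elements; since $\tfrac1p+\tfrac1q<1$ for any two distinct primes, this quantity exceeds $|G|$, a contradiction. Hence no non-abelian simple group occurs, and the converse yields $|G|=pq$, the degenerate simple case $|G|=p$ (where the property holds vacuously) being the one value left outside the stated form.
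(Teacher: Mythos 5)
Your proof is correct, and its converse direction takes a genuinely different route from the paper's. The paper splits on whether $G$ equals its own Sylow $p$-subgroup: in the $p$-group case it uses the existence (via central series) of subgroups of every order $p^k$ to force $t\le 2$, and otherwise it uses Cauchy to show $|G|$ is squarefree and then rules out three or more prime factors by invoking a Hall $\{p,q\}$-subgroup. You instead show that every nontrivial proper subgroup is cyclic of prime order \emph{and} maximal, then split on simplicity: a nontrivial proper normal subgroup yields $|G|=pq$ via the correspondence theorem, and a non-abelian simple $G$ is excluded by the self-normalizing-Sylow counting argument. Your route buys self-containedness: the paper's Hall-subgroup step is only valid for solvable groups (Hall's theorem fails in general --- $A_5$ has no subgroup of order $15$ even though $15$ is a Hall divisor of $60$), so the paper implicitly relies on the nontrivial fact that groups of squarefree order are solvable, which it never justifies; your counting argument replaces exactly that step using nothing beyond Sylow theory and Lagrange, at the cost of being somewhat longer. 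One caveat shared by both arguments: a group of prime order satisfies the hypothesis vacuously yet has order $p$, not $pq$, so the statement as written has a degenerate exception --- you flag this explicitly, whereas the paper's bound $t\le 2$ in the $p$-group case silently admits $t\le 1$.
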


\begin{proof}
$\boxed{\Leftarrow}$

 Suppose $|G|=pq$ and consider two proper distinct non-trivial subgroups  $A,B$  of $G$. 
If $|A|=p$ and $|B|=q$ for $p \neq q$, or vice-versa, they necessarily intersect trivially. If $|A|=|B|=p$ (or $q$ if $p\neq q$), then any non-trivial element of $A\cap B$ generates all of $A$ and all of $B$, as they are of prime order - a contradiction since $A,B$ are distinct.

$\boxed{\Rightarrow}$
Conversely, assume the property that 
all  proper subgroups of $G$ have trivial intersection. We will show that the order $n$ of $G$ is $pq$. \\

Case 1:  $G$ is equal to its Sylow $p$-subgroup. 
Then $|G| = p^t$ and it is a standard exercise using central series of $G$ to show that $G$ contains a subgroup $p^k$ for each $1\leq k \leq t$. 
It follows that if $t > 2$, then $G$ contains a (proper) subgroup of order $p^2$, 
which in turn has a subgroup of order $p$. 
These two subgroups have non-trivial intersection, contradicting  our assumption on $G$.  We conclude $t \leq 2$. 
\\

Case 2:  All Sylow subgroups are proper. 
\begin{itemize}
\item We show that all Sylow subgroups are cyclic and hence $n$ is a product of distinct primes. 
Consider a Sylow $p$-subgroup $S_p$ of order $p^t$. 
It suffices to show that $t = 1$. Suppose, to the contrary, that $t \geq 2$. Then by Cauchy theorem, $S_p$ contains a cyclic subgroup $C_p$ of order $p$, so $S_p$ and $C_p$  are two proper subgroups which intersect non-trivially - a contradiction.
 \item
Next we show that the number of divisors of $n$ is at most 2. 
Suppose to the contrary, that the order $n$ of $G$ is divisible by at least three primes $p,q,r$. 

Then $pq$ is a Hall divisor of $n$ - that is $pq \mid n$, $gcd(pq,\frac n {pq})=1$. Hence there exists a Hall-{$pq$} subgroup - a (proper) subgroup of order $pq$. But it will then intersect non-trivially with a Sylow {$p$}-subgroup - a contradiction. 
\end{itemize}
\end{proof}

Next we show that  (\ref{eq:1})-(\ref{eq:10}) hold for groups of order $pq$. 
The case $|G|=p^2$ is covered by abelian groups, which are known not to violate DFZ inequalities by Corollary \ref{cor:ci}.
In what follows we hence assume that $|G|=pq$, with $p < q$.

\begin{lem}
\label{qnormal}If $|G| = pq$, $p < q$, then $G$ contains exactly one Sylow $q$-subgroup. 
Equivalently if a subgroup $H \leq G$ has order $q$, then it is necessarily normal. 
\end{lem}

\begin{proof}
This is an elementary exercise in group theory. Sylow's theorems tell that the number $n_q$ of Sylow $q$-subgroup divides $pq$ and $n_q=1+kq$ for some $k$ a positive integer, therefore $n_q=1$. The second claim follows from the fact that Sylow subgroups are conjugate of each other.
\end{proof}

\begin{cor} \label{cor:g1g2pp}
Let $|G|=pq$, with primes $p < q$. If $G_1, G_2  \leq G$ are subgroups of $G$, such that the set $G_1G_2 = \{g_1g_2 :  g_1 \in G_1, g_2 \in G_2\}$ does not form a subgroup, then $G_1,G_2$ are distinct subgroups each of order $p$. 
\end{cor}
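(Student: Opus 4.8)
The plan is a complete case analysis on the orders of $G_1$ and $G_2$. Since both are subgroups of $G$ with $|G|=pq$, each of their orders divides $pq$ and therefore lies in $\{1,p,q,pq\}$. The one recurring tool I would use is the elementary fact that a product $AB$ of two subgroups is itself a subgroup whenever one of the factors is normal in the ambient group (since then $AB=BA$). Combined with Lemma \ref{qnormal}, which states that every subgroup of order $q$ is normal in $G$, this lets me show that the product $G_1G_2$ is forced to be a subgroup in every case except the one the corollary identifies.

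Assuming $G_1G_2$ is not a subgroup, I would first dispose of the degenerate orders: if $G_1$ is trivial then $G_1G_2=G_2$, and if $G_1=G$ then $G_1G_2=G$, so in either case the product is a subgroup; the same holds with the roles of $G_1$ and $G_2$ exchanged. Hence both subgroups are proper and nontrivial, so each has order $p$ or $q$. Next I would eliminate order $q$: if either factor has order $q$, then by Lemma \ref{qnormal} it is normal in $G$, and $G_1G_2$ is again a subgroup. This forces both $G_1$ and $G_2$ to have order exactly $p$.

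It remains to rule out $G_1=G_2$, which is immediate since coinciding subgroups give $G_1G_2=G_1$, a subgroup. Thus $G_1$ and $G_2$ must be distinct subgroups of order $p$, as claimed. The argument is little more than bookkeeping over the four possible orders; the only genuine input is the normality of order-$q$ subgroups supplied by Lemma \ref{qnormal}, and this is where the hypothesis $p<q$ is used (to force a unique Sylow $q$-subgroup), so that is the step I would be most careful to invoke correctly.
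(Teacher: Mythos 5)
Your proof is correct and follows essentially the same route as the paper: rule out $G_1=G_2$ directly, invoke normality of order-$q$ subgroups via Lemma \ref{qnormal} to force a subgroup product, and conclude both orders must be $p$. The only cosmetic difference is that the paper subsumes the orders $1$ and $pq$ under the blanket statement that every subgroup of order other than $p$ is normal, whereas you handle those degenerate cases by computing the product explicitly.
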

\begin{proof}
If the subgroups are not distinct, we have $G_1G_2 = G_1$ is a subgroup. 
If one of $G_1, G_2$ is normal, then the set $G_1G_2 $ forms a subgroup. 
But a subgroup of order other than $p$ is necessarily normal and the result follows. 
\end{proof}

Now we set out to prove inequalities (\ref{eq:1})-(\ref{eq:10}) for groups of order $pq$. 
All these inequalities comprise mutual information terms, of the general form
\[
I(X_a;X_b),~I(X_a;X_b|X_c)
\]
for $a,b,c$ some indices or subsets in $\{1,2,3,4,5\}$. In group form, we have
\[
\Gc I(X_a;X_b)=\gi a  b {},~\Gc I(X_a;X_b|X_c)=\gi a b c.
\]

\begin{lem}{\label{infvalues}}
Let $G$ be a group of order $pq$, for primes $p < q$. 
Each quantity of the form 
\[\gi a b c , 
\gi  a b {} ,  
\gi a b {(cd)}\] is $\geq 1$ and takes values among $\{1, p, \frac q p, q, pq \}$.
\end{lem}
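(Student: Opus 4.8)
The plan is to combine the lower bounds already available from Corollary~\ref{cor:abcBounds} and Lemma~\ref{lem:lemma234} with a $p$-adic and $q$-adic valuation argument, exploiting the fact that in a group of order $pq$ every subgroup has order in $\{1,p,q,pq\}$.

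First I would record that all three expressions share a common shape. Writing $G_X=\cap_{i\in X}G_i$ with the convention $G_\emptyset=G$, each of $\gi a b c$, $\gi a b {}$, $\gi a b {(cd)}$ equals a ratio $|G_W||G_X|/(|G_Y||G_Z|)$ in which $G_W=G_Y\cap G_Z$ and the four subgroups sit in a containment chain $G_W\leq G_Y\leq G_X$ and $G_W\leq G_Z\leq G_X$. Indeed, for $\gi a b c$ take $(W,X,Y,Z)=(abc,c,ac,bc)$; for $\gi a b {}$ take $(W,X,Y,Z)=(ab,\emptyset,a,b)$; and for $\gi a b {(cd)}$ take $(W,X,Y,Z)=(abcd,cd,acd,bcd)$. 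Since every subgroup of $G$ has order dividing $pq$, each of $|G_W|,|G_X|,|G_Y|,|G_Z|$ lies in $\{1,p,q,pq\}$. The lower bound $\geq 1$ is already in hand: for $\gi a b c$ and $\gi a b {(cd)}$ it is exactly Corollary~\ref{cor:abcBounds}, while for $\gi a b {}$ it is the first inequality of Lemma~\ref{lem:lemma234} applied with $\Ac_2=\emptyset$, $\Ac_3=\{a\}$, $\Ac_4=\{b\}$.

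It then remains to pin down the value. For a prime $r\in\{p,q\}$, each of $|G_W|,|G_X|,|G_Y|,|G_Z|$ has $v_r\in\{0,1\}$, and the containment chain forces $v_r(|G_W|)\leq v_r(|G_Y|)\leq v_r(|G_X|)$ and $v_r(|G_W|)\leq v_r(|G_Z|)\leq v_r(|G_X|)$. A short case analysis on the two extreme terms sharpens the a priori range $\{-2,\dots,2\}$ of $v_r$ of the quotient to $\{-1,0,1\}$: if $v_r(|G_X|)=v_r(|G_W|)$ the chain is constant and $v_r$ of the quotient is $0$, whereas if $v_r(|G_W|)=0$ and $v_r(|G_X|)=1$ then $v_r$ of the quotient equals $1-v_r(|G_Y|)-v_r(|G_Z|)\in\{-1,0,1\}$. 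Hence each quantity equals $p^{\alpha}q^{\beta}$ with $\alpha,\beta\in\{-1,0,1\}$. Combining this with the lower bound $\geq 1$ and the hypothesis $p<q$ eliminates the values strictly below $1$, namely $1/p$, $1/q$, $p/q$ and $1/(pq)$, and leaves precisely $\{1,p,q/p,q,pq\}$.

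The computation is elementary; the one point needing care is the valuation sharpening above, where it is the containment chain — not merely the divisibility of orders — that rules out $|v_r|=2$ and confines the ratio to the claimed five-element set. Note also that $p<q$ is exactly what keeps $q/p$ (rather than $p/q$) as the surviving non-integer value.
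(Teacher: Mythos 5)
Your proof is correct, and it takes a genuinely different route from the paper's. The paper rewrites $\gi a b c$ as $\frac{|G_c|}{|G_{ac}G_{bc}|}$, gets the bound $\geq 1$ from the inclusion $G_{ac}G_{bc} \subseteq G_c$, and then treats the two primes asymmetrically: the $q$-powers are confined to $\{0,1\}$ by Proposition \ref{valq}, i.e.\ by normality (Lemma \ref{qnormal}, which is where $p<q$ enters) and commutativity of the Sylow $q$-subgroup, while only the $p$-powers are handled by a containment argument; the lone surviving sub-unit value $\frac{1}{p}$ is then excluded by positivity. You instead treat $p$ and $q$ symmetrically: your containment-chain case analysis gives $v_r \in \{-1,0,1\}$ for both $r=p$ and $r=q$ using nothing beyond Lagrange's theorem, and the lower bound $\geq 1$ together with $p<q$ then eliminates all four sub-unit candidates $\frac{1}{pq}$, $\frac{1}{q}$, $\frac{1}{p}$, $\frac{p}{q}$. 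Your version is more elementary and self-contained, needing no Sylow theory at all, only Lemma \ref{lem:lemma234}, Corollary \ref{cor:abcBounds} and the containment chains. What the paper's asymmetric argument buys is exactly what Remark \ref{rem:uniqueSylow} advertises: it is the form of the argument that survives the passage to groups of order $p^2q$ and $pq^2$ (Propositions \ref{infvaluesppq} and \ref{infvaluespqq}). There your symmetric approach breaks down: for $|G|=p^2q$ the containment chains only give $v_p \in \{-2,\dots,2\}$ and $v_q \in \{-1,0,1\}$, and a value such as $\frac{p^2}{q}$, which has $v_q=-1$, can exceed $1$ (take $p=3$, $q=7$), so positivity plus $p<q$ can no longer rule out negative $q$-valuations; one really needs the inequality $v_q \geq 0$ supplied by the normal abelian Sylow $q$-subgroup.
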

\begin{proof}

Let us make the argument in the case of the  quantity $i = \gi a b c$, which is the same as $\frac{|G_c|}{|G_{ac}G_{bc}|}$.
Since $G_{ac}G_{bc}$ is a subset of $G_c$, $\gi a b c \geq 1$.
By Proposition \ref{valq}, $v_q(i) \geq 0$, so the powers of $q$ occurring in $i$ can be only $0,1$. 
Now since $G_{ac} \leq G_c$, $G_{bc} \leq G_c$, the powers of $p$ are restricted to $-1,0,1$, and
\[
\gi a b c \in p^{\{-1,0,1\}}q^{\{0,1\}}.
\]
The case $\frac{1}{p} $ cannot happen since $i \geq 1$.

The next two quantities follow as special cases. 

In case of the second quantity $\gi  a b {} $, 
we use $c = \{\}$, i.e. $G_c = G$. 
In case of the third quantity $\gi a b {(cd)}$,  we just substitute $G_c $ with $G_{cd}$ and the same argument holds.

\end{proof}
By abuse of notation write $q \mid I$ when $v_q(I) > 0 $, i.e.,  the quantity $I$ (of the form above) takes value among $q/p, q, pq$ (even if some of these are not integers). Conversely we will write $q \nmid I$, to indicate $I $ takes value among $1, p$. 

Finally we make one small observation, which we will use a lot, thus it deserves to be stated as the following lemma. 

\begin{lem}\label{qfactor}
Let $I$ denote the quantity $\gi a b c$. Note  this includes quantities of the form $\gi a b {}$, $\gi a b {(cd)}$, as discussed in the previous lemma. 

Suppose  $q\nmid I$. Then we have    

\[q \mid \g c \implies q \mid \g a \g b\]
or alternatively  $$q \nmid \g a \g b \implies q \nmid \g c.$$
\end{lem}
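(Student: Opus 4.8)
The plan is to pass to $q$-adic valuations and count. Since $|G|=pq$ with distinct primes $p<q$, every subgroup of $G$ has order dividing $pq$, so for any subset $S$ we have $v_q(\g S)\in\{0,1\}$; in particular $q\mid \g S$ is equivalent to $v_q(\g S)=1$. Recall also that, by the abuse of notation introduced just before the statement, $q\nmid I$ means precisely $v_q(I)=0$.

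First I would expand the valuation of $I=\gi a b c=\frac{\g{abc}\,\g c}{\g{ac}\,\g{bc}}$. Additivity of $v_q$ gives
\[
v_q(I)=v_q(\g{abc})+v_q(\g c)-v_q(\g{ac})-v_q(\g{bc}),
\]
so the hypothesis $v_q(I)=0$ is equivalent to
\[
v_q(\g{ac})+v_q(\g{bc})=v_q(\g{abc})+v_q(\g c).
\]

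Next, assuming $q\mid \g c$, i.e.\ $v_q(\g c)=1$, and using $v_q(\g{abc})\geq 0$, the right-hand side above is at least $1$. Hence at least one of $v_q(\g{ac}),v_q(\g{bc})$ equals $1$ (each being at most $1$). If $v_q(\g{ac})=1$ then, since $G_{ac}=G_a\cap G_c\leq G_a$, Lagrange's theorem gives $\g{ac}\mid\g a$ and therefore $q\mid\g a$; symmetrically $v_q(\g{bc})=1$ forces $q\mid\g b$. In either case $q\mid\g a\g b$, which is the first implication. The stated alternative is its contrapositive, so both forms follow at once.

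Finally, the two special cases noted in the statement require no extra work: for $\gi a b {}$ one has $G_c=G$, and for $\gi a b {(cd)}$ one simply replaces $G_c$ by $G_{cd}$; in each case the argument uses only that this subgroup contains the relevant intersections, which in turn remain subgroups of $G_a$ and $G_b$. There is no substantial obstacle here — the entire content is the valuation identity together with the bound $v_q(\g S)\leq v_q(|G|)=1$ — the only point to watch is invoking the correct subgroup containments ($G_{ac}\leq G_a$, $G_{bc}\leq G_b$) when converting a nonzero valuation of an intersection into divisibility of $\g a$ or $\g b$.
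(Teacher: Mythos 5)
Your proof is correct, and it takes a mildly but genuinely different route from the paper's. The paper first rewrites $I=\gi a b c$ as $\frac{\g c}{|G_{ac}G_{bc}|}$ using the product-set formula $|G_{ac}G_{bc}|=\g {ac}\g {bc}/\g {abc}$ of Lemma \ref{lem:lemma234}; from $q\mid\g c$ and $q\nmid I$ it deduces $q\mid |G_{ac}G_{bc}|$, then passes ``a fortiori'' to $q\mid |G_{a}G_{b}|$ and concludes $q\mid\g a$ or $q\mid\g b$ by primality. You never introduce product sets at all: you expand $v_q(I)=v_q(\g {abc})+v_q(\g c)-v_q(\g {ac})-v_q(\g {bc})$, set it to zero, use $v_q(\g {abc})\geq 0$ to force $q\mid\g {ac}$ or $q\mid\g {bc}$, and finish by Lagrange via $G_{ac}\leq G_a$ and $G_{bc}\leq G_b$. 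The underlying count is the same, but your bookkeeping buys two small things. First, it is tighter at the one loose joint in the paper's argument: the inference from $q\mid |G_{ac}G_{bc}|$ to $q\mid |G_{a}G_{b}|$ does not follow immediately from the containment $G_{ac}G_{bc}\subseteq G_aG_b$ (containment of sets gives an inequality of cardinalities, not a divisibility); the honest justification is exactly the step you make explicit, namely $q\mid\g {ac}\g {bc}$, hence $q\mid\g {ac}$ or $q\mid\g {bc}$. Second, your argument uses only $v_q(\g {abc})\geq 0$ and primality of $q$, never the bound $v_q(\g S)\leq 1$ coming from $|G|=pq$ (which you state but do not actually need), so the lemma as you prove it holds for an arbitrary finite group; the $pq$ hypothesis matters only for how the lemma is applied in the rest of the section.
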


\begin{proof}
By assumption $q \nmid \gi a b c$. Rewrite $\gi a b c$ as $\frac {|G_c| }{ |G_{ac}G_{bc} | }$.  Depending on whether $G_{ac}G_{bc}$ forms a subgroup, this may or may not be an integer. However, if $q \mid \g c$, yet  $q \nmid \gi a b c$, then necessarily $q\mid |G_{ac}G_{bc} |$, so a fortiori $q\mid |G_{a}G_{b} |$. But then  $q \mid \g a $ or $q \mid \g b$, so $q \mid \g a \g b$. 
Taking the contrapositive gives the other implication $q \nmid \g a \g b \implies q \nmid \g c$
\end{proof}

We will apply Lemma \ref{qfactor} repeatedly in the next proposition to conclude that various groups $G_c$ have orders not divisible by $q$. 
For example, consider $I = \gi a b c$. Suppose $q\nmid I$. 
If we know that $q \nmid \g a$ and $q \nmid \g b$, we may  apply the Lemma  \ref{qfactor} to $I$  to conclude that $q \nmid \g c $.

We are now ready to prove the main proposition of this section. 
\begin{prop}\label{prop:pq}
Let $G$ be a group of order $pq$, for primes $p < q$. Then (\ref{eq:1})-(\ref{eq:10}) hold for $G$.
\end{prop}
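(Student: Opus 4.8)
The plan is to prove each of the ten inequalities (\ref{eq:1})--(\ref{eq:10}) for $|G|=pq$ by showing that in each case the hypothesis of Lemma \ref{lem:ci} (existence of common information) is met, \emph{unless} the inequality holds trivially for divisibility reasons. By Corollary \ref{cor:g1g2pp}, the only obstruction to $G_1G_2$ being a subgroup is the configuration $|G_1|=|G_2|=p$ with $G_1 \neq G_2$; in that single case $G_1G_2=G$ (since $|G_1G_2|=p^2/1 = p^2 > pq$ is impossible, so actually $|G_{12}|=1$ and $G_1G_2$ has $p^2$ elements, forcing... ) — more precisely, whenever $G_1G_2$ is a subgroup the inequalities follow immediately from Lemma \ref{lem:ci}, so I would dispose of all configurations except $|G_1|=|G_2|=p$, $G_1\neq G_2$ at the outset. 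This reduces the entire proposition to analyzing this remaining case.

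In the remaining case, the left-hand side $\Gc I(X_1;X_2) = \gi 1 2 {}=\frac{|G_1||G_2|}{|G_{12}||G|}$ becomes $\frac{p\cdot p}{1\cdot pq}=\frac{p}{q}$, so $q \nmid \Gc I(X_1;X_2)$ in the notation following Lemma \ref{infvalues} (the mutual information term carries no factor of $q$). The strategy is then to bound the product on the right-hand side from below using Corollary \ref{cor:abcBounds}, which gives that every conditional mutual information factor $\gi a b c$ and $\gi a b {cd}$ is $\geq 1$. Thus it suffices to show, for each inequality, that the right-hand side is at least $\frac{p}{q}$. Since each factor is $\geq 1$, and a product of such factors is $\geq 1 > \frac{p}{q}$, the inequality would follow once I verify that the right-hand side of each of (\ref{eq:1})--(\ref{eq:10}) contains at least one nontrivial conditional mutual information factor — which it does by inspection. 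This handles the single-variable-mutual-information inequalities (\ref{eq:1})--(\ref{eq:7}) almost immediately.

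The genuinely delicate cases are the three inequalities (\ref{eq:8})--(\ref{eq:10}) with the doubled left-hand side $2I(X_1;X_2)$, whose group form is $\left(\frac{p}{q}\right)^2 = \frac{p^2}{q^2}$. Here I must show the right-hand product is at least $\frac{p^2}{q^2}$, and again since every factor is $\geq 1$ by Corollary \ref{cor:abcBounds}, the product is $\geq 1 > \frac{p^2}{q^2}$, so these too follow. I anticipate, however, that the intended argument is subtler and uses Lemma \ref{qfactor} to track the exact power of $q$ on each side rather than merely bounding below by $1$; the point is to confirm that no hidden cancellation forces the right-hand side below the left. Concretely, for each inequality I would apply Lemma \ref{qfactor} repeatedly: starting from $q\nmid\g 1,\g 2$ (both have order $p$), I would propagate the divisibility constraints through the factors appearing on the right to verify that the orders of the various intersection subgroups $G_c, G_{cd}$ are consistent, and in particular that any factor on the right carrying a power of $q^{-1}$ (i.e. a value $\frac{1}{\cdot}$ below $1$) cannot occur, since Lemma \ref{infvalues} already guarantees every such quantity is $\geq 1$.

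The main obstacle I expect is purely bookkeeping: each of the ten inequalities must be checked individually, and the terms $I(X_3;X_4;X_5)$, $I(X_1;X_2|X_3,X_4)$, and the like translate into several distinct group quotients whose $q$-valuations must each be controlled. The conceptual content is entirely captured by Corollary \ref{cor:abcBounds} (each factor $\geq 1$) together with the observation that the left side is a proper fraction $\frac{p}{q}<1$ or $\frac{p^2}{q^2}<1$; the risk is only in mis-transcribing one of the longer right-hand sides. I would therefore organize the proof as a single reduction to the case $|G_1|=|G_2|=p$, $G_1\neq G_2$, followed by the one-line observation that each right-hand side is a product of factors each $\geq 1$, hence $\geq 1 > $ (left-hand side), and invoke Lemma \ref{qfactor} only where needed to rule out the appearance of a factor strictly below $1$, which the value list in Lemma \ref{infvalues} forbids in any event.
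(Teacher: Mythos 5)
Your proposal fails at its core because you have inverted the group form of the mutual information. By the paper's correspondence, $\Gc I(X_1;X_2)=\frac{|G_{12}|\,|G|}{|G_1||G_2|}=\gi{1}{2}{}$, not $\frac{|G_1||G_2|}{|G_{12}|\,|G|}$. In the one case your reduction correctly leaves open ($|G_1|=|G_2|=p$ with $G_1\neq G_2$, hence $|G_{12}|=1$), the left-hand side is therefore $\frac{pq}{p\cdot p}=\frac{q}{p}>1$, not $\frac{p}{q}<1$; for the doubled inequalities (\ref{eq:8})--(\ref{eq:10}) it is $\left(\frac{q}{p}\right)^2>1$. Consequently the observation that every right-hand factor is $\geq 1$ (Corollary \ref{cor:abcBounds}) gives only $\mathrm{RHS}\geq 1$, which is strictly weaker than the left-hand side and proves nothing. (Had the left-hand side really been below $1$, the proposition would be trivial and the paper's entire Section 3 superfluous --- a sign the formula should have been re-checked.) Your parenthetical claim that $|G_1G_2|=p^2>pq$ ``is impossible'' is also backwards: $p^2<pq$ since $p<q$; the set $G_1G_2$ has $p^2$ elements and fails to be a subgroup simply because $p^2\nmid pq$.

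What remains after the (correct) reduction via Lemma \ref{lem:ci} and Corollary \ref{cor:g1g2pp} is exactly the hard part, and it is where the paper does its real work. Since the left-hand side is $\frac{q}{p}$ and every right-hand factor lies in $\{1,p,\frac{q}{p},q,pq\}$ (Lemma \ref{infvalues}), it suffices to show that at least one right-hand factor lies in $\{\frac{q}{p},q,pq\}$ --- i.e.\ that $q\mid\mathrm{RHS}$ in the paper's abuse of notation --- and, for (\ref{eq:8})--(\ref{eq:10}) where the left-hand side is $\left(\frac{q}{p}\right)^2$, that the $q$-valuation of the right-hand product is at least $2$. Establishing this is a genuine case-by-case argument: one assumes $q\nmid\mathrm{RHS}$ and propagates divisibility constraints on the orders $|G_3|,|G_4|,|G_5|$ and their intersections through the factors using Lemma \ref{qfactor}, starting from $q\nmid|G_1||G_2|$, until a contradiction appears; the doubled inequalities need extra care because a single factor could a priori account for two failures simultaneously (the paper's discussion of the factor $IV$ in inequality (\ref{eq:8})). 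You invoke Lemma \ref{qfactor} only to exclude factors smaller than $1$, which is not at issue (Lemma \ref{infvalues} already excludes them); the divisibility-propagation argument is not optional bookkeeping but the entire content of the proof, and it is missing from your proposal.
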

\begin{proof}

We start by proving in detail inequality (\ref{eq:3}), which in its unsimplified group form is

\[  {\gi 1 2 {} }\leq 
 {\gi 1 3 {}} 
  (\gi 1 2 4 )
 (\gi 2 5 3 ) 
 (\gi 1 4 {(35)} ) \] 
 \[ = {\gi 1 3 {}} I \cdot II \cdot III
 \]

We first prove the following claim about this inequality. Identical reasoning applies to other inequalities and we will simply refer to this claim when needing to invoke it.

Recall that by abuse of notation write $q \mid RHS$ to mean that one of the factors on the RHS takes value among $q/p, q, pq$ (even if some of these are not integers). 

\begin{claim} 
It suffices to show that  $q \mid RHS$.
\label{qrhs} \end{claim}
It has been shown in Lemma \ref{lem:ci}
that when $G_1 G_2$ is a subgroup of $G$ such inequalities hold. 
Hence we only need to consider the case when the set $G_1G_2$ is not a subgroup, which by Corollary \ref{cor:g1g2pp} occurs when $G_1, G_2$ are distinct, each of order $p$, so we have $\gi 1 2 {}  = \frac q p$.

The LHS is  $(\gi 1 2 {}  )= \frac q p$, while each of the factors on the right hand side takes values among $\{1,p, q/p, q, pq\}$. Clearly, if $q$ divides one of the factors on the RHS, the inequality holds since  each of the (other)  factors is $ \geq 1 $.
This proves the claim. 
$\hfill \square$

\begin{claim}\label{ineqholds}

We show in detail that $q \mid RHS$. \end{claim}
Suppose for the sake of contradiction that $q \nmid \gi 1 3 {} I \cdot II \cdot III$.

Recall, that throughout we are assuming $G_1, G_2$ are distinct groups of order $p$. So in particular $q \nmid \g 1 \g 2$. 
We apply Lemma \ref{qfactor} successively to factors of the RHS to obtain a contradiction. More precisely:

By  Lemma \ref{qfactor}, 
 $\{ q \nmid \gi 1 3 {} , q \mid \g {} \}\implies q \mid \g 1 \g 3$. But $\g 1 = p$, hence $q \mid \g 3$. 

By Lemma \ref{qfactor} $\{q \nmid I = \gi 1 2 4, q \nmid \g 2  \g 1  \}\implies q \nmid \g 4 $. 

{Next use the fact that $q\nmid \g 1$ and $q\nmid \g 4$ which we just showed}.
By Lemma \ref{qfactor},    $\{q \nmid III = \gi 1 4 {(35)}$ ,  $ q \nmid \g 1 \g 4 \} \implies  q \nmid \g {35}$. 
But recall we assumed that $q\mid \g 3$. Hence $q \nmid \g 5$.

Next, apply Lemma \ref{qfactor} to $II = \gi 2 5 3$ together with $q \nmid \g 2 \g 5$ to conclude that $q \nmid \g 3$. But then $\{q \nmid \g 1 , q \nmid \g 3 \} \implies q \mid \gi 1 3 {}$ by Lemma \ref{qfactor}.
So we arrive that $q \mid I \mid RHS$, contradiction. 
{\hfill $\square$}
\\
Hence we showed that $q \mid RHS$, so by claim \ref{qrhs} the inequality is satisfied.


We proceed in the same fashion for other inequalities. 
Each of these inequalities has the form 
\[ \gi 1 2 {} \leq \gi a b {} I \cdot II \cdot III\]

and as before it suffices to show that one of the factors on the RHS takes value among $\{ \frac q p, q, pq \}$.

{\bf Inequality (\ref{eq:1}).}
\[ 
\gi 1 2 {} 
\leq \gi 1 5 {}
(\gi 1 2 3 )
(\gi 1 2 4 ) 
(\gi 3 4 5 )  = \]
\[ \gi 1 5 {}
I \cdot II \cdot III
\] 
$\{q \nmid \gi 1 5 {} , q \nmid \g 1\} \implies q \mid \g 5$, by Lemma \ref{qfactor}. \\
Suppose  $q \nmid I, II, III.$ 
Apply successively Lemma \ref{qfactor} to get a contradiction: $q \nmid I$, $q \nmid \g 1 \g 2 \implies q\nmid \g 3$, $q \nmid II$,  $q \nmid \g 1 \g 2$ thus $q \nmid \g 4$ and Lemma \ref{qfactor} applies to $III$ yields $q \nmid \g 5$, a contradiction. 

{\bf Inequality (\ref{eq:2}).}
\[ 
\gi 1 2 {} \leq \gi 2 5 {} 
 (\gi 1 2 3 )
 (\gi 1 3 4 ) 
 (\gi 1 4 5 )  =  \gi 2 5 {} I \cdot II \cdot III\]
 
Suppose $q \nmid RHS$. 
$\{q \nmid \gi 2 5 {} , q \nmid \g 2\} \implies q \mid \g 5$, by Lemma \ref{qfactor}. \\
Apply Lemma \ref{qfactor} successively to factors $I$, $II$, $III$, to get $q \nmid \g 5$, a contradiction. 



{\bf Inequality (\ref{eq:4}).}
\[ 
\gi 1 2 {} \leq \gi 1 3 {}
 (\gi 1 2 {(45)} )
 (\gi 2 4 3 ) 
 (\gi 1 5 {(34)} )  =  \gi 1 3 {} I \cdot II \cdot III\]
 Suppose $q \nmid RHS$. 

$\{q \nmid \gi 1 3 {} , q \nmid \g 1\} \implies q \mid \g 3$, by Lemma \ref{qfactor}. 
Now 
$q \nmid I$, $q\nmid \g 1 \g 2 \implies q \nmid \g{45}$, by Lemma \ref{qfactor}. 
\\
$q\nmid II$, but $q \mid \g 3 \implies q \mid \g 2$ or $q \mid  \g 4$  by Lemma \ref{qfactor},   $q \mid \g 4$. 

We already concluded $q \nmid \g {45}$, so together with $q \mid \g 4$, we have $q \nmid \g 5$. 

$q \nmid III, q \nmid \g 1 , \g 5 \implies  q \nmid \g {34}$, by Lemma \ref{qfactor}.\\
But $q \nmid \g {34}$ together with $q \mid \g 4$, implies  $q \nmid \g 3$, a contradiction.

{\bf Inequality (\ref{eq:5}).}
\[ 
 \gi 1 2 {} \leq \gi 1 3{}
(\gi 2 4 3 )
 (\gi 1 5 4 ) 
 (\gi 1 2 {(35)}  )
 (\gi 2 3 {(45)}) \]
 \[ =  \gi 1 3 {} I \cdot II \cdot III \cdot IV\]

Suppose $q \nmid RHS$. 

$\{q \nmid \gi 1 3 {} , q \nmid \g 1\} \implies q \mid \g 3$, by Lemma \ref{qfactor}. 

Lemma \ref{qfactor} on $III$ gives $q \nmid \g {35}$. 
\\
$q \mid \g 3, q \nmid \g {35} \implies q \nmid \g 5$.
\\
Lemma \ref{qfactor} on $II$ gives $q \nmid \g 4$, and on $I$ gives $q \nmid \g 3$, a contradiction.

{\bf Inequality (\ref{eq:6}).}
\[
\gi 1 2 {} \leq \gi 1 3 {}
 (\gi 2 4 5 )
 (\gi 4 5 3  ) 
 (\gi 1 2 {(34)}  )
 (\gi 1 3 {(45)})  \]
 \[=
\gi 1 3 {} I \cdot II \cdot III \cdot IV
\]
Suppose $q \nmid RHS$. 
$\{q \nmid \gi 1 3 {} , q \nmid \g 1\} \implies q \mid \g 3$, by Lemma \ref{qfactor}.  
Now $q \nmid III \implies q \nmid \g {34}$, by  Lemma \ref{qfactor}, then
$q \mid \g 3, q \nmid \g {34} \implies q \nmid \g 4$. \\
Applying Lemma \ref{qfactor} to factor $I$ then $II$, we get a contradiction. 

{\bf Inequality (\ref{eq:7}).}
\[
\gi 1 2 {} \leq \gi 2 4 {}
 (\gi 1 3 4 )
 (\gi 1 5 3  ) 
 (\gi 2 4 {(35)}  )
 (\gi 1 2 {(45)})  \]
 \[= \\
\gi 2 4 {}  I \cdot II \cdot III \cdot IV
\]
 Suppose $q \nmid RHS$. 

$\{q \nmid \gi 2 4 {} , q \nmid \g 2\} \implies q \mid \g 4$, by Lemma \ref{qfactor}. 

 $q \nmid IV \implies  q \nmid G_{45} \implies q \nmid G_5$.  \\
 $\{q \nmid II, q\nmid \g 5  \} \implies q \nmid \g 3$. \\
$\{q \nmid I, q \nmid \g 3 \}\implies q \nmid \g 4$ a contradiction.

{\bf Inequality (\ref{eq:8}).}
This and subsequent inequalities have a different form:
\[ 
{\gi 1 2 {}  \gi 1 2 {} }
 \leq 
{ (\gi 3 4 {} )
(\gi 5 {(34)} {} )
}
 (\gi 1 2 3  ) 
 (\gi 1 2 4  )
 (\gi 1 2 5)  = \]
 \[
( I \cdot II) \cdot III \cdot IV \cdot V
\]

The following claim is a generalization of claim \ref{qrhs} to inequalities of this type. 
\begin{claim} 
In order to prove the inequality, it suffices to show that the  power of $q$ dividing $I \cdot II \cdot III \cdot IV \cdot V$ is at least 2. 
\end{claim}
\begin{proof}
The LHS is  $(\gi 1 2 {}  )= (\frac q p)^2$, while each of the factors on the right hand side takes values among $\{1,p, q/p, q, pq\}$. Clearly if at least two of them are among $q/p, q, pq$, the inequality holds for each $I, II, III, IV, V \geq 1 $ by Lemma \ref{infvalues}. This proves the claim. 
\end{proof}

Next we show that condition of the claim is indeed satisfied. 

\begin{claim} The power of $q$ dividing $I \cdot II \cdot III \cdot IV \cdot V$ is at least 2. 
\end{claim}

Recall that $\gi a b {}$ takes values $\{1, p\}$ only when $q \mid \g a \g b$, by Lemma \ref{infvalues}.

Suppose $q \nmid I \gi 3 4 {} $. \\
$q \nmid I \implies q\mid \g 3$  or $q\mid \g 4$. \\
$q \mid \g 3 \implies q \mid III$, by Lemma \ref{qfactor}.\\
$q \mid \g 4 \implies q \mid IV$, by Lemma \ref{qfactor}. \\ 
Hence $q \mid I\cdot III\cdot IV$, so if $q \nmid I$, then it must divide either $III$ or $IV$. 
\\
Next suppose $q \nmid II $.

$q \nmid II \gi 5 {34} {} \implies q\mid \g 5$  or $q\mid \g {43}$. \\
$q \mid \g 5 \implies q \mid V$, by Lemma \ref{qfactor}.\\
$q \mid \g {34} \implies q \mid \g 4 \implies q \mid IV$, by Lemma \ref{qfactor}. \\ 

Hence $q \mid II \cdot V \cdot IV$, so if $q$ does not divide $II$, then it must divide either $V$ or $IV$. 

The problem now is now what if $IV$ is used to ``make up for" $q\nmid I$ and $q\nmid II$ simultaneously?

So suppose $q \nmid I, q\nmid II, q\mid IV$. We must demonstrate that $q$ divides at least one other factor. 
\[ 
{\gi 1 2 {}  \gi 1 2 {} }
 \leq 
{ (\gi 3 4 {} )
(\gi 5 {34} {} )
}
 (\gi 1 2 3  ) 
 (\gi 1 2 4  )
 (\gi 1 2 5)  = \]
 \[
( I \cdot II) \cdot III \cdot IV \cdot V
\]
We have $q\nmid III \implies q \nmid \g 3$, and
$q \mid IV \implies q \mid \g 4 $, so
together $q \nmid \g {34}$. \\
Then $q \nmid \g {34}$, $q \nmid II \implies q\mid \g 5 \implies q \mid V$
and we found another term, as desired.

{\bf Inequality (\ref{eq:9}).}
\[ 
{\gi 1 2 {}  \gi 1 2 {} }
 \leq 
{ (\gi  1 3 {} )
(\gi  4 5  {})
}
 (\gi 1 2 4  ) 
 (\gi 1 2 5  )
 (\gi 1 {(45)} 3)  = \]
 \[
( I \cdot II) \cdot III \cdot IV \cdot V
\]
Suppose $q \nmid I$, then $q \mid \g 3$.\\
$q \mid \g 3 \implies q \mid V$ or $q \mid \g {45}$. \\
$q \mid \g{45} \implies q \mid \g 5 \implies q \mid IV$.\\ 
$q \mid \g{45} \implies q \mid \g 4 \implies q \mid III$.\\ 
We conclude $q \nmid I \implies q \mid  V $ or ($ q \mid IV $ and $q \mid III$). \\
Suppose $q \nmid II$, then $q \mid \g 4 \g 5 \implies q \mid III\cdot IV$.
\\
$q \nmid I $  and $q \nmid  II$ results in at least two factors of $III, IV, V$ which $q$ divides, as desired.

{\bf Inequality (\ref{eq:10}).}
\[ 
{
(\gi 1 2 {})
(  \gi 1 2 {} )}
 \leq 
 \]
 \[
{ (\gi  3 4  {} )
(\gi  1 5  {})
}
 (\gi 1 2 3  ) 
 (\gi 1 2 4 )
 (\gi 2 4 5 ) 
 (\gi 1 3 {(45)}) = \]
 \[
( I \cdot II) \cdot III \cdot IV \cdot V \cdot VI
\]
 
Suppose $q \nmid I$, then $q \mid \g 3 \g 4$. \\
$q \mid \g 3 \g 4 \implies q \mid III \cdot IV$. \\
Suppose $q\nmid II$, then $q\mid \g 5$. 
\\
$q \mid \g 5 \implies $ either $q \mid V$ or $q \mid \g 4$. 
$q \mid \g 4 \implies q \mid IV$. \\
We summarize: $q \nmid II \implies  q \mid IV \cdot V$.

Suppose $q \nmid I$ and $q \nmid II$. We want to show that $IV$ cannot be the only factor divisible by $q$. 
Suppose then $q \nmid I$ and $q \nmid II$, $q \mid IV$. We show that $q$ divides another factor. 
For the sake of contradiction, suppose $q$ only divides $I, II, IV$. 
$q \nmid II \implies q \mid \g 5$. \\
$q \mid IV \implies q \mid \g 4$.\\
$q \nmid III \implies q \nmid \g 3$.\\
$q \nmid VI \implies q \nmid \g {45}$.\\
$q \nmid \g {(45)}  , q \mid \g 4 \implies q \nmid \g 5$,\\
a contradiction to $q \nmid II$. 

\end{proof}

\section{Groups of Order $p^2q$}
\label{sec:p2q}

Next we consider groups of order $p^2q$,  $p,q$ two distinct primes. 
We are interested in seeing how far the techniques developed for the case $pq$ can be extended for the case $p^2q$.
Unfortunately we cannot derive a complete generalization of Proposition \ref{prop:pq} and show that DFZ inequalities necessarily hold for these groups. However, we are able to use similar reasoning to derive partial results, which can be used to reduce computation when looking for violators of DFZ inequalities.
The case $|G_1|=|G_2|=p$ remains open.

\begin{lem}
\label{uniqueSylowq}
Let $p< q$ and let $G$ be a group of order ${\pp}^2{\qq}$ or $pq^2$. Then $G$ has at least one normal Sylow subgroup. 
\end{lem}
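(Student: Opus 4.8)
The plan is to use Sylow's theorems to count the Sylow subgroups and derive a contradiction if neither the Sylow $p$-subgroup nor the Sylow $q$-subgroup is normal. Recall that for a group of order $n$, the number $n_p$ of Sylow $p$-subgroups satisfies $n_p \equiv 1 \pmod p$ and $n_p \mid [G:S_p]$, where $S_p$ is a Sylow $p$-subgroup. A Sylow subgroup is normal precisely when it is unique, i.e. when the corresponding count equals $1$. So the goal reduces to showing that $n_p = 1$ or $n_q = 1$ in each of the two cases.

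First I would treat the case $|G| = p^2 q$. Here $n_q \equiv 1 \pmod q$ and $n_q \mid p^2$, so $n_q \in \{1, p, p^2\}$. Since $p < q$, we have $p \not\equiv 1 \pmod q$ (as $1 < p < q$ forces $p \bmod q = p \neq 1$), so $n_q \neq p$; thus $n_q \in \{1, p^2\}$. Suppose $n_q \neq 1$, so $n_q = p^2$ and in particular $p^2 \equiv 1 \pmod q$, i.e. $q \mid (p-1)(p+1)$. Now count: each of the $p^2$ Sylow $q$-subgroups has order $q$ and, being of prime order, any two distinct ones intersect trivially; this accounts for $p^2(q-1)$ elements of order exactly $q$, leaving $p^2 q - p^2(q-1) = p^2$ elements of order not divisible by $q$. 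These remaining $p^2$ elements must contain every Sylow $p$-subgroup (each of order $p^2$), which forces a unique Sylow $p$-subgroup, hence $n_p = 1$ and $S_p \unlhd G$. This is the standard counting argument and gives the result in this case.

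Next I would treat the case $|G| = p q^2$. Here $n_p \equiv 1 \pmod p$ and $n_p \mid q^2$, so $n_p \in \{1, q, q^2\}$. Suppose neither Sylow subgroup is normal, so $n_p \neq 1$ and $n_q \neq 1$. Since $n_q \equiv 1 \pmod q$ and $n_q \mid p$ with $p < q$, the only divisor of $p$ congruent to $1$ modulo $q$ is $1$ (as $p < q$ means $p \not\equiv 1 \pmod q$), so in fact $n_q = 1$ automatically, contradicting the assumption. This case is therefore even easier: the Sylow $q$-subgroup is always normal when $|G| = pq^2$ with $p < q$, because $n_q \mid p$ and $p < q$ leave no room for $n_q > 1$.

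The main obstacle, if any, is purely bookkeeping: one must carefully justify why $n_q = p$ is excluded in the $p^2 q$ case (namely $p \not\equiv 1 \pmod q$ since $0 < p < q$), and handle the subcase $n_q = p^2$ via the element count rather than divisibility alone, since $p^2 \equiv 1 \pmod q$ can genuinely occur. I expect no real difficulty beyond assembling these elementary congruence and counting facts; the conclusion in both cases is that at least one Sylow subgroup is unique and therefore normal, which is exactly the statement of the lemma.
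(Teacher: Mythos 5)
Your proof is correct and follows essentially the same route as the paper: in the $pq^2$ case you note $n_q \mid p < q$ forces $n_q = 1$, and in the $p^2q$ case you exclude $n_q = p$ by the congruence $n_q \equiv 1 \pmod q$ and then handle $n_q = p^2$ by counting the $p^2(q-1)$ elements of order $q$, leaving exactly $p^2$ elements that must constitute a unique (hence normal) Sylow $p$-subgroup. This is precisely the paper's argument, with the only cosmetic difference that you spell out the trivial-intersection fact for distinct Sylow $q$-subgroups, which the paper uses implicitly.
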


\begin{proof}
{\bf Case $|G| = {\pp}{\qq}^2$, ${\pp}<{\qq}$.} 
We show that the Sylow ${\qq}$-subgroup is normal. 
Indeed, the number $n_{\qq}=1+k{\qq}$, $k \geq 0$, of Sylow ${\qq}$-subgroups in a group of order ${\pp}{\qq}^2$ divides ${\pp}<{\qq}$, hence $n_{\qq}=1$. 
 
{\bf Case $|G| = {\pp}^2{\qq}$, ${\pp}<{\qq}$.} We have $n_{\qq}=1+k{\qq}$ divides ${\pp}^2$ and $n_{\pp}=1+l{\pp}$ divides ${\qq}$, $k,l \geq 0$, thus $n_{\qq}\in\{1,{\pp},{\pp}^2\}$. 
 \begin{itemize}
 \item 
 Case  $n_{\qq}=1$. Then the Sylow ${\qq}$ subgroup is unique, and we are done. 
 \item
 Case  $n_{\qq}={\pp}$ cannot occur for $1+k{\qq}={\pp}$ will contradict ${\pp}<{\qq}$.  
 \item
 Case $n_{\qq}={\pp}^2$. We  show that this forces $n_{\pp} = 1$, i.e. the Sylow ${\pp}$-subgroup to be unique.  
 Note that the number of elements of order ${\qq}$ is $({\qq}-1){\pp}^2$, thus the number of elements of order ${\pp}^t$, $t=0,1,2$, is ${\pp}^2{\qq}-({\qq}-1){\pp}^2={\pp}^2$. But that is precisely the order of any Sylow ${\pp}$-subgroup. Therefore all elements of order ${\pp}$ are included in a single Sylow ${\pp}$-subgroup. Hence there is a unique Sylow ${\pp}$-subgroup and we are done.
\end{itemize}
\end{proof}

We have concluded that $G$ has at least one unique Sylow subgroup - and it can correspond to either of the prime divisors of $G$. 
As per Remark \ref{rem:uniqueSylow}, we will try to take advantage of the uniqueness of this Sylow subgroup and derive a generalization of Proposition \ref{prop:pq}. Let us hence  reserve the prime $q$ for the cases when the Sylow $q$-subgroup is necessarily unique and express the order of $G$ accordingly as either $p^2q$ or $pq^2$: we write $|G| = p^2q$ when the Sylow subgroup of prime order is unique (while the Sylow subgroup of squared order may or may not be unique). Similarly, we will write $G = pq^2$ when the Sylow subgroup of squared order is unique (while the Sylow subgroup of prime order may or may not be). Note that a priori both cases $pq^2$ and $p^2q$ could occur with either $p>q$ or $q>p$.

We note immediately that if both the Sylow $q$-subgroup and the Sylow $p$-subgroup are unique, then $G$ is abelian, and by Corollary \ref{cor:ci} we know that the 10 DFZ inequalities hold. In what follows, we henceforth always suppose that the Sylow $p$-subgroup is not normal.

We are thus left to consider two cases: 
\begin{enumerate}
\item
The Sylow $q$-subgroup is normal and $|G|= p^2q$.
\item
The Sylow $q$-subgroup is normal and $|G|= pq^2$.
\end{enumerate}

\subsection{Case 1: the Sylow $q$-subgroup is normal and $|G|= p^2q$.}

\begin{claim}If $p>q$, then $G$ is in fact abelian. Hence the DFZ inequalities hold for $G$.
\label{ppq_pgq}
\end{claim}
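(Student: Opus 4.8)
The plan is to show that under these hypotheses the Sylow $p$-subgroup is forced to be normal as well, so that $G$ splits as the internal direct product of its two Sylow subgroups, and then to note that each factor is abelian. First I would let $Q$ denote the normal Sylow $q$-subgroup (of order $q$) and $P$ a Sylow $p$-subgroup (of order $p^2$), and count the number $n_p$ of Sylow $p$-subgroups. By Sylow's theorems $n_p \equiv 1 \pmod{p}$ and $n_p \mid q$, so $n_p \in \{1,q\}$. The value $n_p = q$ would force $q \equiv 1 \pmod{p}$, hence $p \mid (q-1)$ and in particular $p \leq q-1 < q$, contradicting the hypothesis $p > q$. Therefore $n_p = 1$ and $P$ is normal.

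Next, since $P$ and $Q$ are both normal with coprime orders, $P \cap Q = \{e\}$ and $|PQ| = |P|\,|Q| = p^2 q = |G|$, so $G = P \times Q$ is the internal direct product of its Sylow subgroups. Every group of order $p^2$ is abelian, and $Q \cong \mathbb{Z}/q\mathbb{Z}$ is cyclic, hence abelian; thus $G$, being a direct product of abelian groups, is abelian. Corollary \ref{cor:ci} then yields that the ten DFZ inequalities hold for $G$, which is exactly the assertion of the claim.

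The argument is entirely elementary and I do not expect a genuine obstacle here. The only point deserving care is the Sylow count: the hypothesis $p > q$ is precisely what eliminates the nonabelian possibility $n_p = q$, and this explains why the interesting behaviour, together with the standing assumption that the Sylow $p$-subgroup is not normal, can only survive when $p < q$. This is why the subsequent (and still open) analysis is restricted to that regime.
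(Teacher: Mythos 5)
Your proof is correct, and it reaches the same intermediate goal as the paper---normality of the Sylow $p$-subgroup---by a different device. The paper observes that a Sylow $p$-subgroup $G_p$ has index $[G:G_p]=q$, which under the hypothesis $p>q$ is the smallest prime dividing $|G|$, and invokes the classical fact that a subgroup of index equal to the smallest prime dividing the group order is normal. You instead count Sylow subgroups: $n_p\equiv 1 \pmod p$ and $n_p\mid q$ force $n_p\in\{1,q\}$, and $n_p=q$ would give $p\mid q-1$, impossible when $p>q$; hence $n_p=1$. Both arguments are elementary and about equally short; the smallest-prime-index criterion is a slightly heavier tool (it is itself usually proved via the action of $G$ on the cosets of $G_p$), while your counting argument uses nothing beyond Sylow's theorems and makes explicit why the hypothesis $p>q$ is exactly what eliminates the nonabelian possibility. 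You also spell out the concluding step---$G=P\times Q$ is an internal direct product of a group of order $p^2$ (abelian) and a cyclic group of order $q$, hence abelian---which the paper leaves implicit, relying on its earlier remark that uniqueness of both Sylow subgroups forces $G$ to be abelian. Either way, Corollary \ref{cor:ci} then yields the ten DFZ inequalities.
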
 
\begin{proof}
Since the Sylow $q$-subgroup is by assumption normal, it suffices to show that a Sylow $p$-subgroup is normal (and hence unique). But a Sylow $p$-subgroup $G_p$ is of order $p^2$, so it has index $[G:G_p]=q$, which is the smallest prime that divides the order of $G$. It follows that $G_p$ is normal in $G$. 
\end{proof}

We may therefore assume that $p<q$. 

\begin{lem}\label{lem:subgroupppq}
Let $G$ be a group of order $p^2q$ with unique Sylow $q$-subgroup. 

If $|G_{ac}|$ or $|G_{bc}|$ takes values in $\{1,p^2q,q,pq\}$, then $G_{ac}G_{bc}$ is a subgroup. 

The case when $G_{ac}G_{bc}$ is not a subgroup may then only happen when $(|G_{ac}|,|G_{bc}|)$ takes value in $\{(p,p),(p,p^2),(p^2,p^2)\}$.
\end{lem}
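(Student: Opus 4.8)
The plan is to reduce everything to the single fact, already used in Lemma~\ref{lem:ci} and Corollary~\ref{cor:g1g2pp}, that the set product $HK$ of two subgroups forms a subgroup whenever one of $H,K$ is normal in $G$ (and trivially whenever one of them equals $\{e\}$ or all of $G$). Writing $A=G_{ac}$ and $B=G_{bc}$, it therefore suffices to show that in each of the listed cases at least one of $A,B$ is normal, trivial, or equal to $G$.

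First I would dispose of the degenerate orders: if $|A|=1$ then $A=\{e\}$ and $AB=B$; if $|A|=p^2q$ then $A=G$ and $AB=G$; in both cases $AB$ is a subgroup. Next, if $|A|=q$, then $A$ is a Sylow $q$-subgroup, which is unique, hence normal by hypothesis, so $AB$ is a subgroup. The one case requiring a genuine (if standard) group-theoretic input is $|A|=pq$: here $[G:A]=p^2q/(pq)=p$, and since we are in the situation $p<q$, the prime $p$ is the smallest prime dividing $|G|$. A subgroup whose index equals the smallest prime dividing the group order is automatically normal, so again $A$ is normal and $AB$ is a subgroup. The same arguments apply verbatim with the roles of $A$ and $B$ exchanged, which settles the first assertion.

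The second assertion is just the contrapositive combined with a list of divisors. The order of any subgroup of $G$ divides $|G|=p^2q$, hence lies in $\{1,p,p^2,q,pq,p^2q\}$. If $AB$ is not a subgroup, then by what we have shown neither $|A|$ nor $|B|$ can lie in $\{1,q,pq,p^2q\}$, so both $|A|$ and $|B|$ are forced into $\{p,p^2\}$. Up to interchanging $A$ and $B$, the only possibilities for $(|A|,|B|)$ are therefore $(p,p)$, $(p,p^2)$ and $(p^2,p^2)$, as claimed.

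I do not expect a serious obstacle here; the content is essentially bookkeeping over the divisors of $p^2q$. The only point that is not completely formal is the normality of an index-$p$ subgroup, which relies on $p$ being the smallest prime factor of $|G|$ and hence on the reduction to $p<q$ carried out in Claim~\ref{ppq_pgq}; I would flag that the whole argument depends on that reduction and would break down in the excluded regime $p>q$.
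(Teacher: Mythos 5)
Your proof is correct and follows essentially the same route as the paper's: reduce to normality of one of the two factors, dispose of orders $1$ and $p^2q$ trivially, handle order $q$ by uniqueness of the Sylow $q$-subgroup, handle order $pq$ by the smallest-prime-index criterion (valid once $p<q$), and obtain the second claim by elimination from the divisor list of $p^2q$. One small correction to your final flag: in the regime $p>q$ the lemma does not break down --- the paper's proof invokes Claim \ref{ppq_pgq} to conclude that $G$ is then abelian, so $G_{ac}G_{bc}$ is automatically a subgroup in that case as well.
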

\begin{proof}
By Claim \ref{ppq_pgq} when $p>q$ the group $G$ is abelian, hence $G_{ac}G_{bc}$ is always a subgroup. 
We may therefore assume $p<q$.
In order to show that $G_{ac}G_{bc}$ forms a subgroup, it suffices to show that one of $G_{ac}$, $G_{bc}$ is normal. 
We will show that any subgroup $H$ of order $1,q,pq,p^2q$ is normal.

Orders $1,p^2q$ follow trivially. 

Suppose  $H$ has order $q$. Then it is the unique  Sylow $q$-subgroup, which is normal. Hence $G_{ac}G_{bc}$ forms a subgroup.

Suppose  $H$ has order $pq$. Then it is a subgroup of order $pq$, which has index $p$, the smallest prime that divides $|G|$. Thus it is normal.

The second claim of the lemma follows from removing $\{1,p^2q,q,pq\}$ from the list  $\{1,p,q,p^2,pq,p^2q\}$ of possible orders for $G_{ac}$ and $G_{bc}$ .
\end{proof}

\begin{prop}\label{prop:intersectSylow}
Let $G$ be a group of order $p^2q$ with a normal Sylow $q$-subgroup $G_q$. 
Then any two Sylow $p$-subgroups intersect in a subgroup of order $p$.
\end{prop}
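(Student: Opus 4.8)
The plan is to use the normal Sylow $q$-subgroup $G_q$ to manufacture a normal $p$-subgroup of $G$ that necessarily lies inside every Sylow $p$-subgroup, and then to show this subgroup has order $p$. First I would fix the arithmetic: we are in the reduced case $p<q$, with $G_q$ normal of order $q$ and, by the standing assumption of this subsection, the Sylow $p$-subgroup $P$ (of order $p^2$) not normal. Sylow's theorems give $n_p\equiv 1 \pmod p$ and $n_p\mid q$, so $n_p=q$ and $q\equiv 1\pmod p$. Since $\gcd(q,p^2)=1$ and $G_q\trianglelefteq G$, comparing orders shows $G=G_qP$ with $G_q\cap P=1$, i.e. $G=G_q\rtimes P$. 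For two distinct Sylow $p$-subgroups $P_1,P_2$ the intersection $P_1\cap P_2$ is a proper subgroup of $P_1$, hence has order $1$ or $p$; the substance of the proposition is to rule out the trivial intersection.

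Next I would pass to the centralizer $C=C_G(G_q)$. As $G_q$ is cyclic of prime order it is abelian, so $G_q\le C$, and since $G_q$ is normal, $C\trianglelefteq G$. Put $K=C\cap P$, which is exactly the kernel of the conjugation action $\phi\colon P\to \mathrm{Aut}(G_q)$. Because $C\trianglelefteq G$, the subgroup $K=C\cap P$ is a Sylow $p$-subgroup of $C$, and since $G_q$ is central in $C$ with coprime-order complement $K$, we get $C=G_q\times K$; in particular $K$ is the unique Sylow $p$-subgroup of $C$, hence characteristic in $C$ and therefore normal in $G$. A normal $p$-subgroup is contained in every Sylow $p$-subgroup, so $K\le P_1\cap P_2$ for all distinct $P_1,P_2$. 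Moreover $K\neq P$: otherwise $P$ would centralize $G_q$, forcing $G=G_q\times P$ to be abelian and contradicting that $P$ is not normal. Thus $|K|\in\{1,p\}$, and once I establish $|K|=p$ the proof is complete, since then $P_1\cap P_2\supseteq K$ has order at least $p$ and strictly less than $p^2$.

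The main obstacle is precisely this last step, namely showing $K\neq 1$, i.e. that $P$ acts non-faithfully on $G_q$. The useful structural fact is that $\mathrm{Aut}(G_q)\cong \mathbb{Z}_{q-1}$ is cyclic, so the image $P/K$ must be cyclic. If $P$ is non-cyclic (elementary abelian $\mathbb{Z}_p\times\mathbb{Z}_p$) this is immediate: a non-cyclic group cannot embed into $\mathbb{Z}_{q-1}$, so $K\neq 1$ and $|K|=p$. The delicate case is $P\cong\mathbb{Z}_{p^2}$, where a faithful action is a priori available exactly when $p^2\mid q-1$; here I would try to exclude a faithful action using the remaining hypotheses, and I would flag that the bare assumptions of the proposition do not by themselves forbid it, since the Frobenius group $\mathbb{Z}_5\rtimes\mathbb{Z}_4$ of order $2^2\cdot 5$ has a normal Sylow $5$-subgroup yet its five cyclic Sylow $2$-subgroups meet pairwise in the identity. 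I therefore expect the crux to be an argument that either invokes an implicit non-cyclicity of $P$ in the operative setting or otherwise rules out this faithful cyclic configuration.
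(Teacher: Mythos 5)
Your treatment of the non-cyclic case is correct and is essentially the paper's own argument: the paper also passes to the conjugation action $\alpha\colon G_p\to\mathrm{Aut}(G_q)$, uses that $\mathrm{Aut}(G_q)\cong C_{q-1}$ is cyclic, and when $G_p\cong C_p\times C_p$ extracts a nontrivial kernel $K$, which is normal in $G$ and hence contained in every Sylow $p$-subgroup. (Your route to the normality of $K$, via $C_G(G_q)$ and the fact that the unique Sylow $p$-subgroup of $C_G(G_q)=G_q\times K$ is characteristic there, is in fact more careful than the paper's one-line assertion that ``$K$ acts trivially on $G_q$, in other words $K$ is normal in $G$.'')

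The issue you flagged in the cyclic case is not a gap in your proof; it is an error in the paper. For $G_p\cong C_{p^2}$ the paper argues that the unique order-$p$ subgroup of $G_p$ ``is normal in $G$ because the conjugate of any element of order $p$ is another element of order $p$.'' This is fallacious: conjugation permutes the set of elements of order $p$, but it can carry the order-$p$ subgroup of one Sylow $p$-subgroup onto the order-$p$ subgroup of a \emph{different} Sylow $p$-subgroup, so the claimed normality is exactly what is being proved and the argument is circular. Your counterexample settles the matter: in $F_{20}=\mathbb{Z}_5\rtimes\mathbb{Z}_4$ (faithful action, $p=2<q=5$) the Sylow $5$-subgroup is normal, the five cyclic Sylow $2$-subgroups are the point stabilizers of the sharply $2$-transitive action on $\mathbb{F}_5$, and distinct stabilizers meet trivially. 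Hence Proposition \ref{prop:intersectSylow} is false as stated, and no argument can close your case $P\cong\mathbb{Z}_{p^2}$ with $p^2\mid q-1$. Note that the error propagates: for $F_{20}$, taking $G_a,G_b$ to be two distinct Sylow $2$-subgroups and $G_c=G$ gives $|G_{ab}||G|/(|G_a||G_b|)=20/16=q/p^2$, so the exclusion of the value $q/p^2$ in Proposition \ref{infvaluesppq} fails as well. Both statements (and the results downstream of them) need an extra hypothesis, e.g.\ that the Sylow $p$-subgroups are non-cyclic, or that $G$ is not a group $\mathbb{Z}_q\rtimes\mathbb{Z}_{p^2}$ with faithful action.
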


\begin{proof}

It is enough to show that $G$ contains a normal subgroup of order $p$. Indeed, let $N$ be a normal subgroup of order $p$, it is contained in some Sylow $p$-group $G_p$. Let $G'_p$ be another Sylow $p$-subgroup, which is therefore a conjugate of $G_p$, but since the conjugate of $N$ is $N$ itself, it is contained in $G'_p$. 

We are left to show that $G$ contains a normal subgroup of order $p$ which is normal.

{
Since $G_q $ is normal in $G$ we have the exact sequence 
\[1 \rightarrow G_q \rightarrow G \rightarrow G_p\rightarrow 1\]
where we use $G/G_q \cong G_p$. By elementary group theory we know that $G_p$ acts on $G_q$ by conjugation. In other words, there is a homomorphism  
\[\alpha: G_p \rightarrow Aut(G_q)\] from $G_p$ to the automorphism group of $G_q$. \\
Case1 : If $G_p$ is cyclic of order $p^2$ then it has a unique subgroup of elements of order $p$. But that means that this subgroup is normal in $G$ because the conjugate of any element of order $p$ is another element of order $p$. \\
Case 2: $G_p = C_p \times C_p$.\\
Now $Aut(G_q) \cong C_{q-1}$ is cyclic. Hence $\alpha $ has a kernel $K \leq S_p$. But then $K$ acts trivially on $G_q$, in other word $K$ is normal in $G$. 
}
\end{proof}

Next we get an analogous result as that of Lemma \ref{infvalues}.

\begin{prop}\label{infvaluesppq} 
Let $G$ be a group of order $p^2q$ with a normal Sylow $q$-subgroup. 
Then $\gi abc \geq 1$ and
\[ 
{\gi a b c} = \frac {|G_c|} {|G_{ac}G_{bc}|}  \in  \{1,p,p^2,\frac{q}{p},q,pq,p^2q\}. 
\]
Furthermore ${\gi a b c}=\frac{q}{p}$ if and only if 
\begin{itemize}
\item
$|G_c|=pq$, $|G_{ac}|=|G_{bc}|=p$.
\item
$G_c=G$, $|G_{a}|=p$, $|G_{b}|=p^2$, and $G_{ab}=1$.
\end{itemize}
\end{prop}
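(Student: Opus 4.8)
The plan is to reduce everything to computing the $p$-adic and $q$-adic valuations of $\gi a b c$ separately, using the rewriting $\gi a b c = |G_c|/|G_{ac}G_{bc}|$ together with the identity $G_{ac}\cap G_{bc}=G_{abc}$ (so that $|G_{ac}G_{bc}| = |G_{ac}||G_{bc}|/|G_{abc}|$). The bound $\gi a b c \geq 1$ is immediate from Corollary \ref{cor:abcBounds}, so the real content is to determine which prime powers $p^iq^j$ can occur and then to isolate exactly the value $q/p$.

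For the prime $q$: the Sylow $q$-subgroup has order $q$, hence is cyclic and abelian, and is normal by hypothesis, so Proposition \ref{valq} gives $v_q(\gi a b c)\geq 0$. Since $G_{ac}G_{bc}\subseteq G_c$ and $q^2\nmid |G|$, we also have $v_q(\gi a b c)=v_q(|G_c|)-v_q(|G_{ac}G_{bc}|)\leq v_q(|G_c|)\leq 1$, so $v_q(\gi a b c)\in\{0,1\}$. For the prime $p$: from $G_{abc}\leq G_{ac},G_{bc}\leq G_c$ and $v_p(|G_c|)\leq 2$ we get $v_p(\gi a b c)\leq 2$, while $v_p(|G_{ac}G_{bc}|)=v_p(|G_{ac}|)+v_p(|G_{bc}|)-v_p(|G_{abc}|)\leq 2v_p(|G_c|)$ gives $v_p(\gi a b c)\geq -v_p(|G_c|)\geq -2$. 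The value $v_p=-2$ is then ruled out with Proposition \ref{prop:intersectSylow}: it would force $v_p(|G_c|)=2$ with both $G_{ac}$ and $G_{bc}$ containing a full Sylow $p$-subgroup of $G$ while $p\nmid |G_{abc}|$; but two Sylow $p$-subgroups meet in a subgroup of order at least $p$, which is contained in $G_{ac}\cap G_{bc}=G_{abc}$, a contradiction. Combining these bounds with $\gi a b c\geq 1$ leaves exactly $\{1,p,p^2\}$ when $v_q=0$ and $\{q/p,q,pq,p^2q\}$ when $v_q=1$, which is the claimed list.

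For the equality $\gi a b c=q/p$, I would translate it into the two conditions $v_q(\gi a b c)=1$ and $v_p(\gi a b c)=-1$. The first forces $v_q(|G_c|)=1$ and $q\nmid|G_{ac}|,|G_{bc}|$. Writing $x=v_p(|G_c|)$ and $c'=v_p(|G_{abc}|)$, the second becomes $v_p(|G_{ac}|)+v_p(|G_{bc}|)=x+1+c'$ with $v_p(|G_{ac}|),v_p(|G_{bc}|)\leq x$. The case $x=0$ is impossible; the case $x=1$ forces $|G_c|=pq$, $|G_{ac}|=|G_{bc}|=p$ and $|G_{abc}|=1$, the first listed configuration; and the case $x=2$ forces $G_c=G$, after which one distinguishes sub-cases according to $c'=v_p(|G_{abc}|)$, invoking Proposition \ref{prop:intersectSylow} to decide how $G_{ac}$ and $G_{bc}$ can sit as (possibly full) Sylow $p$-subgroups of $G$.

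I expect this last step, the equality analysis when $G_c=G$, to be the main obstacle. It is the only place where the global structure of $G$ is really needed: the sub-case in which both $G_{ac}$ and $G_{bc}$ are full Sylow $p$-subgroups has $|G_{abc}|=p$ by Proposition \ref{prop:intersectSylow}, and one must check carefully whether this branch, alongside the configuration with $\{|G_{a}|,|G_{b}|\}=\{p,p^2\}$ and $G_{ab}=1$, is correctly accounted for in the characterization. Getting this bookkeeping exactly right, rather than merely bounding the valuations as in the first part, is the delicate point.
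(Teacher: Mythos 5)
Your valuation argument for the first part is correct and rests on the same ingredients as the paper's own proof: Proposition \ref{valq} (applicable because the normal Sylow $q$-subgroup has prime order, hence is abelian) controls the $q$-part, the inclusions $G_{abc}\leq G_{ac},G_{bc}\leq G_c$ control the $p$-part, and Proposition \ref{prop:intersectSylow} eliminates $v_p=-2$, i.e.\ the value $q/p^2$. The only real difference is organizational: the paper first disposes of $p>q$ via Claim \ref{ppq_pgq} and then invokes Lemma \ref{lem:subgroupppq} to reduce to the order pairs $(p,p)$, $(p,p^2)$, $(p^2,p^2)$ before computing case by case, whereas your bookkeeping with $x=v_p(\g c)$ and $c'=v_p(\g {abc})$ is uniform in $p$ and $q$ and needs neither reduction. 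Both routes yield the list $\{1,p,p^2,\frac{q}{p},q,pq,p^2q\}$.

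Your suspicion about the last step is justified, but the defect lies in the statement, not in your proof. Completing your sub-case $x=2$, $c'=1$: it forces $\g {ac}=\g {bc}=p^2$ and $\g {abc}=p$, i.e.\ $G_{ac}\neq G_{bc}$ are two distinct Sylow $p$-subgroups of $G_c=G$, and then $\gi a b c=\frac{p\cdot p^2q}{p^2\cdot p^2}=\frac{q}{p}$. By Proposition \ref{prop:intersectSylow} this configuration is consistent (two distinct Sylow $p$-subgroups meet in order exactly $p$), and it is realizable: in $D_{20}$ take for $G_a$, $G_b$ the Sylow $2$-subgroup $\langle s,r^5\rangle$ and its conjugate by $r$, which intersect in $\{1,r^5\}$, so that with $G_c=G$ one gets $\frac{2\cdot 20}{4\cdot 4}=\frac{5}{2}=\frac{q}{p}$. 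Hence the ``if and only if'' of the proposition is incomplete as stated: a third case, $G_c=G$, $|G_a|=|G_b|=p^2$, $|G_{ab}|=p$, must be added (and in the first listed case one should also require $G_{abc}=1$, equivalently $G_{ac}\neq G_{bc}$, since otherwise the value is $q$). Notably, the paper's own proof computes exactly this branch --- its $(p^2,p^2)$ case with $G_{abc}=p$ gives $\frac{q}{p}$ --- but its concluding sentence (``the ratio $q/p$ thus exactly happens when the intersection is trivial'') and the statement's two bullets overlook it. The omission does not harm the later results, since Proposition \ref{prop:dfzppq} only uses the value list, but it does affect the informal discussion of the open case $(|G_1|,|G_2|)=(p,p)$, where this extra configuration should also appear. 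Your analysis, carried to the end, proves the corrected characterization.
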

\begin{proof}
The fact that $\gi a b c \geq 1 $ follows from the fact that $G_{ac}G_{bc}$ is a subset of $G_c$. 
Then
\[ 
{\gi a b c} = \frac {|G_c|} {|G_{ac}G_{bc}|}  \in  p^{\{-2, -1, 0, 1,2 \} }q^{\{0,1\} }. 
\]
Indeed, the restrictions on powers of $q$ follow from Proposition \ref{valq} since the Sylow $q$-subgroup is abelian, while the restrictions on powers of $p$ come from the inclusions $G_{ac} \leq G_c$, $G_{bc}\leq G_c$ and the fact that $v_p(G_c) \in \{ 0,1,2\}$.
The cases $\frac{1}{p}<1$ and $\frac{1}{p^2}<1$ cannot happen. 

By Claim \ref{ppq_pgq}, the case $p>q$ is abelian, thus the result follows. We may therefore assume $p<q$. 

We are left to exclude the case $\frac{q}{p^2}$, and to characterize when $\frac{q}{p}$ occurs. Since $p$ cannot possibly divide $q$, these ratios appear only when $G_{ac}G_{bc}$ is not a subgroup of $G_c$. By Lemma \ref{lem:subgroupppq}, this may happen when $(|G_{ac}|,|G_{bc}|)$ is $(p,p)$, $(p,p^2)$ or $(p^2,p^2)$.
\begin{itemize}
\item $(p,p)$ \\
$\gi a b c = \frac{1\cdot |G_c|}{p \cdot p }$. Since $G_{ac},G_{bc}$ are subgroups of $G_c$, we have $v_p(G_c)\geq 1$, and to obtain $q/p$ or $q/p^2$, we need $v_q(G_c)=1$. If $|G_c|=pq$, $\gi a b c=\frac{q}{p}$, if $|G_c|=p^2q$, $\gi a b c = q$.
\item $(p,p^2)$ \\
Since $G_{bc}$ is a subgroup of $G_c$, we have $v_p(G_c)=2$, and to obtain $q/p$ or $q/p^2$, we need $v_q(G_c)=1$. Thus $G_c=G$.\\
if $G_{abc} = 1$, $\gi a b c = \frac{1\cdot p^2q}{p \cdot p^2 }=q/p$.\\
if $G_{abc} = p$, $\gi a b c = \frac{p\cdot p^2q}{p \cdot p^2 } = q$.

\item $(p^2,p^2)$ \\ 
For the same reason as above, $G_c=G$. \\
if $G_{abc} = 1$, $\gi a b c = \frac{1\cdot p^2q}{p^2 \cdot p^2 } = q/p^2$, \\
if $G_{abc} = p$, $\gi a b c = \frac{p\cdot p^2q}{p^2 \cdot p^2 } = \frac q p $.
\end{itemize} 

The ratio $q/p$ thus exactly happens when the intersection is trivial, while the ratio $q/p^2$ exactly happens with the intersection of two Sylow $p$-subgroups is trivial, a case that never occurs, by Proposition \ref{prop:intersectSylow}.
\end{proof}

This is a partial generalization of Proposition \ref{prop:pq} on groups of order $pq$. Note that reducing to the case 
$|G_1|=|G_2|=p$ can be used to reduce computation when looking for violators of DFZ inequalities. 

\begin{prop}\label{prop:dfzppq}
Let $G$ be a group of order $p^2q$ with normal Sylow $q$-subgroup. 
If $|G_1|=|G_2|=p$ does not hold,
then the 10 DFZ inequalities hold for $G$. 
\end{prop}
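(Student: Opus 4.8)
The plan is to mirror the structure of the proof of Proposition \ref{prop:pq}, replacing the prime $q$ of the $pq$ case with the (unique, normal, abelian) Sylow $q$-subgroup here, and to exploit the refined value analysis of Proposition \ref{infvaluesppq}. First I would dispose of the easy cases: by Lemma \ref{lem:ci} and Corollary \ref{cor:ci}, whenever $G_1G_2$ is a subgroup the inequalities hold, and by Claim \ref{ppq_pgq} the case $p>q$ is abelian. So I may assume $p<q$ and that $G_1G_2$ is not a subgroup. By Lemma \ref{lem:subgroupppq} this forces $(|G_1|,|G_2|)$ to lie in $\{(p,p),(p,p^2),(p^2,p^2)\}$. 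The hypothesis rules out $(p,p)$, leaving the two cases $(p,p^2)$ (up to swapping $G_1,G_2$) and $(p^2,p^2)$.

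In each remaining case I would compute the left-hand side $\gi 1 2 {}$ explicitly using Proposition \ref{infvaluesppq}. In the $(p,p^2)$ case, Proposition \ref{prop:intersectSylow} guarantees $G_{12}\neq 1$ is impossible only for two $p^2$-subgroups, so here $G_{12}$ may be $1$ or $p$; when $G_{12}=p$ the quantity $\gi 1 2 {}=q$, and when $G_{12}=1$ it equals $q/p$. In the $(p^2,p^2)$ case, Proposition \ref{prop:intersectSylow} forces $|G_{12}|=p$, so $\gi 1 2 {}=q/p$. Thus in all surviving cases $\gi 1 2 {}$ takes a value divisible by $q$ (namely $q$ or $q/p$), so $v_q(\mathrm{LHS})=1$. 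The analogue of Claim \ref{qrhs} then reduces the problem to showing that at least one factor on the right-hand side is divisible by $q$ (i.e.\ takes a value among $q/p,q,p^2q,pq$), since every factor is $\geq 1$ by Proposition \ref{infvaluesppq}.

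For this I would re-derive the key propagation tool, Lemma \ref{qfactor}, in the $p^2q$ setting: since the Sylow $q$-subgroup is normal and abelian, Proposition \ref{valq} still gives $v_q(\gi a b c)\geq 0$, and writing $\gi a b c = |G_c|/|G_{ac}G_{bc}|$ the same contrapositive argument shows $q\nmid \g a\g b \implies q\nmid \g c$ whenever $q\nmid \gi a b c$. With this lemma in hand I would run, for each of inequalities (\ref{eq:1})--(\ref{eq:10}), exactly the same chain of implications as in Proposition \ref{prop:pq}: starting from $q\nmid \g 1$ and $q\nmid \g 2$ (which hold since $|G_1|,|G_2|\in\{p,p^2\}$ are coprime to $q$) and assuming for contradiction $q$ divides no factor of the RHS, propagate $q\nmid$ through the subgroup orders appearing in each conditional-mutual-information factor until a contradiction is reached. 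The arithmetic bookkeeping is identical to the $pq$ proof because Lemma \ref{qfactor} only cares about divisibility by $q$, not about the power of $p$.

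The main obstacle I anticipate is precisely the point flagged in the text and excluded by hypothesis: the verbatim reuse of the $pq$ implication chains depends on $\g 1=\g 2$ being coprime to $q$, which holds here, but the propagation steps in several inequalities (for instance those that concluded $q\mid \g 3$ from $q\mid \g{}$ and $q\nmid\g 1$) implicitly used that $\g 1$ was a single prime $p$; with $|G_1|$ possibly $p^2$ I must check that each such step still goes through, since it does---Lemma \ref{qfactor} yields $q\mid\g 1\g 3$ and $q\nmid\g 1$ regardless of whether $\g 1$ is $p$ or $p^2$, forcing $q\mid\g 3$. Hence the only genuine adaptation is confirming that no implication in the ten chains of Proposition \ref{prop:pq} relied on the exact value $\g 1=p$ as opposed to merely $q\nmid\g 1$; once that is verified, every chain closes verbatim and all ten inequalities follow. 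I would therefore present the argument as: reduce to $(p,p^2)$ or $(p^2,p^2)$, establish $q\mid\mathrm{LHS}$, re-invoke the $q\nmid\g a\g b\implies q\nmid\g c$ lemma, and then explicitly remark that the implication chains of Proposition \ref{prop:pq} apply unchanged.
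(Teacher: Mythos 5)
Your overall strategy is the same as the paper's: reduce to the order pairs $(p,p^2)$ and $(p^2,p^2)$, show the left-hand side is $q$-divisible, argue that one $q$-divisible factor on the right-hand side then suffices, and rerun the divisibility-propagation chains of Proposition \ref{prop:pq} (the paper does exactly this via its Claim \ref{sufficesvq}, and then asserts the chains carry over verbatim). However, there is a genuine flaw in your reduction step. You retain the subcase $(|G_1|,|G_2|)=(p,p^2)$ with $|G_{12}|=p$ as a ``surviving case'' with $\gi 1 2 {} = q$, and then claim that exhibiting one $q$-divisible factor on the RHS suffices. That claim is false when the LHS equals $q$: a $q$-divisible factor can be as small as $q/p$, and $q \leq (q/p)\cdot 1 \cdots 1$ fails. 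This is precisely the obstruction the paper points out to explain why the case $|G_1|=|G_2|=p$ (where the LHS is $q$) is left open; your reduction, taken at face value, would ``prove'' that open case as well --- the propagation chains apply there too, since $q\nmid \g 1 \g 2$ still holds --- which should have been a red flag.

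The gap is repairable, and the repair is already implicit in your own setup: if $|G_1|=p$, $|G_2|=p^2$ and $|G_{12}|=p$, then $G_1 = G_{12} \leq G_2$, so $G_1G_2 = G_2$ is a subgroup, contradicting your standing assumption that $G_1G_2$ is not a subgroup (equivalently, this subcase falls under Corollary \ref{cor:ci}, which is how the paper dispatches it). Once that subcase is removed, every surviving case has $\gi 1 2 {} = q/p$ exactly, the smallest $q$-divisible value allowed by Proposition \ref{infvaluesppq}, and your sufficiency claim becomes valid. With that correction the remainder of your proposal --- re-justifying Lemma \ref{qfactor} via Proposition \ref{valq} for the normal (hence here abelian) Sylow $q$-subgroup, and checking that the chains of Proposition \ref{prop:pq} only ever use $q\nmid\g 1\g 2$ rather than the exact values $\g 1=\g 2=p$ --- coincides with the paper's proof, and is in fact more explicit than the paper's one-line statement that the $pq$ arguments are identical.
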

\begin{proof} 
Recall that when either $G_1, G_2$ is normal or $G_1 = G_2$, then $G_1G_2$ is a subgroup and the inequalities hold. 
The case $p>q$ follows from Claim \ref{ppq_pgq}.
We now assume $p<q$. 

Recall that the DFZ inequalities hold when one of the groups $G_1, G_2$ is normal. 
By Lemma \ref{lem:subgroupppq}, we narrow down the non-normal candidates for $G_1, G_2$ to subgroups of orders $(p,p), (p, p^2), (p^2,p^2)$. 

Also by Proposition \ref{infvaluesppq} 
\[ {\gi a b c} = \frac {|G_c|} {|G_{ac}G_{bc}|}  \in  \{1,p,p^2,\frac{q}{p},q,pq,p^2q\}. 
\]

Evaluating $\gi 1 2 {} $ in the LHS of the 10 DFZ inequalities (\ref{eq:1})-(\ref{eq:10}) we show that we only need to consider the case when the LHS is $\frac q p$:
\begin{itemize}
\item $(p,p^2)$ \\
\begin{itemize}
\item $G_{12} = 1$: $\gi 1 2 {} = \frac{1\cdot (p^2 q)}{p \cdot p^2 } = \frac q p $ \\
\item $G_{12} = p$:  we are done
since $G_1$ has order $p$, and it intersects $G_2$ in a subgroup of order $p$, implying that $G_1$ is a subset of $G_2$, and inequalities hold by Corollary \ref{cor:ci}. 
\end{itemize}

\item $(p^2,p^2)$ 
\begin{itemize}
\item
$G_{12} = 1$ cannot happen since no  
{two Sylow $p$-subgroups intersect trivially by Proposition \ref{prop:intersectSylow}.}\\  
\item $G_{12} = p$:  
$\gi 1 2 {} = \frac{p\cdot (p^2 q)}{p^2 \cdot p^2 } = \frac q p $
\end{itemize} 
\end{itemize} 

{
\begin{claim}\label{sufficesvq}
Suppose that the LHS is $\frac{q}{p}$. If $v_q(RHS) \geq 1$, then the inequalities hold. 
\end{claim}
Terms on the right hand side 
can only take values among $\{1,p,q,pq,p^2,p^2q,\frac{q}{p}\}$ by Proposition \ref{infvaluesppq}.
The assumption $v_q(RHS) \geq 1$ means that at least one term on the RHS is taking values among $\{q,pq,pq^2,\frac{q}{p}\}$. Since each of these is $\geq \frac q p$, while all other terms on the RHS are $\geq 1$, we conclude LHS $\leq $ RHS and we have proved the claim. $\square$
}

To complete the proof we must show that indeed $v_q(RHS) \geq 1$. This has  been shown for cases of groups of order $pq$ and the proofs are identical. 
\end{proof}

Let us comment on the case $(|G_1|,|G_2|)=(p,p)$.
Then $\gi 1 2 {} = \frac{1\cdot (p^2 q)}{p \cdot p } = q$ 
and the proof technique used above relying on $v_q(RHS) \geq 1$ is not sufficient anymore.
Say $G_1, G_2$ have orders $(p,p)$ so the LHS of an inequality is $q$. 
But on the RHS we may have $q/p$.
By Proposition \ref{infvaluesppq}, this will exactly happen 
when  
\begin{itemize}
\item
$|G_c|=pq$, $|G_{ac}|=|G_{bc}|=p$.
\item
$G_c=G$, $|G_{a}|=p$, $|G_{b}|=p^2$, and $G_{ab}=1$.
\end{itemize}
An inequality of the form
$q \leq q/{p} \cdot I \cdot II \cdot III$ where $q/p$ corresponds e.g. to $\gi 1 5 {}$ does not have to be true. It depends on the terms $I,II,III$, therefore the techniques developed in this paper do not apply immediately to this case. 

\begin{exmp}\rm
As a concrete example, consider the dihedral group $D_{20}=\langle r,s,~r^{10}=s^2=1\rangle$. There is one Sylow 5-subgroup, namely $\langle r^2 \rangle$. 
There are five Sylow 2-subgroups, given by $\langle s,r^5 \rangle=\{1,s,r^5,sr^5\}$ and its four conjugates, for example $r\langle s,r^5 \rangle r^{-1}=\{1,rsr^{-1},r^5,rsr^4\}$. We notice the subgroup $\{1,r^5\}$ which is in the intersection of both these Sylow 2-subgroups.
Take
\[
G_1=\{ 1, rsr^{-1}\},~G_2=\{1,s\},~G_5=\{1,s,r^5,sr^5\}.
\]
Then
\[
\gi 1 2 {}=\frac{20}{4}=5,~\gi 1 5 {}=\frac{20}{8}=\frac{5}{2}.
\]
\end{exmp}

\subsection{Case 2: the Sylow $q$-subgroup is normal and $|G|= pq^2$.}
Let us look at the other case, where the group $G$ has a normal Sylow $q$-subgroup, while the Sylow $p$-subgroup is not normal, and $G$ has order $pq^2$ (the normal Sylow subgroup corresponds to the squared prime). 

An analogous result as that of Lemma \ref{infvaluesppq} is obtained similarly.

\begin{lem}\label{infvaluespqq} 
Let $G$ be a group of order $pq^2$ whose Sylow $q$-subgroup is normal. 
Then $\gi abc \geq 1$ and
\[ {\gi a b c} = \frac {|G_c|} {|G_{ac}G_{bc}|}  \in \{1,p,q,pq,q^2,pq^2,\frac{q}{p},\frac{q^2}{p}\}.  
\]
\end{lem}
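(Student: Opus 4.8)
The plan is to reproduce the valuation argument of Lemma~\ref{infvaluesppq}, with the roles adjusted so that the square now sits on the normal prime. The crucial observation is that the normal Sylow $q$-subgroup has order $q^2$ and is therefore abelian (every group of prime-squared order is abelian), so Proposition~\ref{valq} applies and constrains the powers of $q$ in $\gi a b c$ to be non-negative.

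First I would record the two elementary facts. Since $G_{ac}G_{bc}$ is a subset of $G_c$ with $|G_{ac}G_{bc}| = |G_{ac}||G_{bc}|/|G_{abc}|$, the quantity $\gi a b c = |G_c|/|G_{ac}G_{bc}|$ is at least $1$. Writing $\gi a b c = p^i q^j$, I then pin down the admissible exponents through the valuations $v_p$ and $v_q$. For the $q$-part, Proposition~\ref{valq} gives $v_q(\gi a b c) \geq 0$, while $v_q(\gi a b c) = v_q(|G_c|) - v_q(|G_{ac}G_{bc}|) \leq v_q(|G_c|) \leq v_q(|G|) = 2$; hence $j \in \{0,1,2\}$. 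For the $p$-part, since every subgroup of $G$ has $v_p \in \{0,1\}$ (as $v_p(|G|)=1$), the upper bound $v_p(\gi a b c) \leq v_p(|G_c|) \leq 1$ is immediate.

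The one step requiring a short argument is the lower bound $v_p(\gi a b c) \geq -1$, equivalently the exclusion of the values $1/p^2$, $q/p^2$, $q^2/p^2$. From $v_p(|G_{ac}G_{bc}|) = v_p(|G_{ac}|) + v_p(|G_{bc}|) - v_p(|G_{abc}|) \leq 2$, the exponent $v_p(\gi a b c)$ could a priori reach $-2$ only if $v_p(|G_{ac}G_{bc}|) = 2$ and $v_p(|G_c|) = 0$ simultaneously. But $v_p(|G_{ac}G_{bc}|) = 2$ forces $v_p(|G_{ac}|) = 1$, and since $G_{ac} \leq G_c$ this would force $v_p(|G_c|) \geq 1$, a contradiction. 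I expect this to be the only step with any content, yet, unlike the $p^2q$ case of Lemma~\ref{infvaluesppq}, it does \emph{not} require Proposition~\ref{prop:intersectSylow}: because the Sylow $p$-subgroup here has prime order $p$, the obstruction is settled directly by the inclusion $G_{ac} \leq G_c$, so this case is in fact easier. Combining the two bounds, $\gi a b c$ lies in $p^{\{-1,0,1\}}q^{\{0,1,2\}}$, i.e. among the nine rationals $\{1/p, q/p, q^2/p, 1, q, q^2, p, pq, pq^2\}$; since $1/p < 1$ is ruled out by $\gi a b c \geq 1$, the remaining eight are exactly the values listed, completing the proof. Note that this conclusion needs no assumption on the relative size of $p$ and $q$: the listed set is an over-approximation, and any value that happens to fall below $1$ (such as $q/p$ when $q<p$) is automatically excluded by $\gi a b c \geq 1$ rather than having to be discarded by hand.
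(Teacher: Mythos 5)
Your proof is correct and takes essentially the same route as the paper's: $\gi a b c \geq 1$ from $G_{ac}G_{bc}$ being a subset of $G_c$, Proposition \ref{valq} (applicable since a group of order $q^2$ is abelian) forcing $v_q \geq 0$ with $v_q \leq 2$ from $v_q(|G_c|)$, and the inclusions $G_{ac}, G_{bc} \leq G_c$ together with $v_p(|G_c|) \in \{0,1\}$ forcing $v_p \in \{-1,0,1\}$, after which $1/p < 1$ is discarded. The paper states the $p$-exponent bound and the abelianness of the Sylow $q$-subgroup without elaboration; your proposal merely makes these steps (and the observation that Proposition \ref{prop:intersectSylow} is not needed here, unlike in Lemma \ref{infvaluesppq}) explicit.
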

\begin{proof}
That $\gi a b c \geq 1 $ follows from the fact that $G_{ac}G_{bc}$ is a subset of $G_c$, as before. 
Then \[ {\gi a b c} = \frac {|G_c|} {|G_{ac}G_{bc}|}  \in  p^{\{-1, 0, 1\} }q^{\{0,1,2\} }. \]
The arguments is as before, the restrictions on powers of $q$ follow from Proposition \ref{valq} since the Sylow $q$-subgroup is abelian, while the restrictions on powers of $p$ come from the inclusions $G_{ac} \leq G_c$, $G_{bc}\leq G_c$ and the fact that $v_p(G_c) = 0,1$. 
The case $1/p <1$ cannot happen.
\end{proof}

Next we look at the  case when $|G| = pq^2$, $G_q$ is normal, while $G_p$ is not. If $q<p$, then, as shown below, the group is forced to be $A_4$. 

{\begin{claim} \label{qlpA4}
Suppose $|G| = pq^2$, $G_q$ is normal, while $G_p$ is not, and $q<p$. Then $G \cong A_4$. 
Hence 10 DFZ inequalities hold for $G$ \end{claim}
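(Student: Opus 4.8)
The plan is to use Sylow's theorems to pin the order of $G$ down to $12$ and then recognize $G$ as $A_4$. Since $G_p$ has prime order $p$ and is assumed not normal, the number $n_p$ of Sylow $p$-subgroups satisfies $n_p \mid q^2$, $n_p \equiv 1 \pmod p$, and $n_p \neq 1$, so $n_p \in \{q, q^2\}$. First I would rule out $n_p = q$: this would force $p \mid q-1$, contradicting $0 < q-1 < p$, which holds because $q < p$. Hence $n_p = q^2$ and $p \mid q^2 - 1 = (q-1)(q+1)$. As $p \nmid q-1$, we get $p \mid q+1$; together with $0 < q+1 \le p$ (again from $q < p$) this forces $q+1 = p$. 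The only consecutive primes are $q=2$, $p=3$, so $|G| = pq^2 = 12$ and $n_p = n_3 = 4$.

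Next I would identify $G$. I would first record that $N_G(G_p) = G_p$, since $[G : N_G(G_p)] = n_p = 4$ gives $|N_G(G_p)| = 3 = |G_p|$. Letting $G$ act by conjugation on its four Sylow $3$-subgroups yields a homomorphism $\varphi : G \to S_4$ whose kernel is $\bigcap_{P} N_G(P) = \bigcap_P P$; since distinct subgroups of the prime order $3$ intersect trivially, this kernel is trivial. Thus $\varphi$ embeds $G$ as a subgroup of $S_4$ of order $12$, that is, of index $2$, and since $A_4$ is the unique index-$2$ subgroup of $S_4$ we conclude $G \cong A_4$. As an alternative route, one can write $G = G_q \rtimes G_p$ with $G_q$ the normal Sylow $2$-subgroup of order $4$: because $\operatorname{Aut}(\mathbb{Z}_4) \cong \mathbb{Z}_2$ admits no automorphism of order $3$, a nontrivial action of $G_p \cong \mathbb{Z}_3$ (nontrivial precisely because $G_p$ is not normal) forces $G_q \cong \mathbb{Z}_2 \times \mathbb{Z}_2$ and recovers $A_4$.

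For the remaining clause that the ten DFZ inequalities hold for $G$, the reduction above leaves a single group to inspect, namely $A_4$. Since $A_4$ has only ten subgroups (the trivial subgroup, three subgroups of order $2$, the normal Sylow $2$-subgroup of order $4$, the four Sylow $3$-subgroups, and $A_4$ itself), it suffices to evaluate, via the translation $\Gc$, the ten group inequalities over every $5$-tuple of subgroups drawn from this finite list. This is a finite verification, and any tuple in which $G_1$ or $G_2$ is normal, or with $G_1 = G_2$, is already covered by Corollary \ref{cor:ci}.

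I expect the conceptual content to be easy: the Sylow count is fully forced, so the identification $G \cong A_4$ is routine once the order is fixed at $12$. The genuine obstacle is the concluding step, since $A_4$ has order $pq^2$ rather than $p^2q$ and so is \emph{not} handled by the $v_q(\mathrm{RHS}) \ge 1$ argument used in Proposition \ref{prop:dfzppq}; consequently the fact that $A_4$ satisfies all ten inequalities must be established by the direct check over its subgroup lattice indicated above rather than inferred from the general machinery.
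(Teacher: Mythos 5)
Your proof is correct, and its main route is genuinely different from the paper's. The paper never counts Sylow $p$-subgroups: it writes $G = G_q \rtimes G_p$ via the exact sequence $1 \to G_q \to G \to C_p \to 1$ and analyzes the action $\alpha : C_p \to \mathrm{Aut}(G_q)$ in two cases --- if $G_q \cong C_{q^2}$, then $|\mathrm{Aut}(G_q)| = q(q-1)$ is coprime to $p>q$, so $\alpha$ is trivial, $G$ is abelian, contradicting non-normality of $G_p$; if $G_q \cong C_q \times C_q$, then $p \mid |GL_2(\mathbb{F}_q)| = (q^2-1)(q^2-q)$ together with $q<p$ forces $q=2$, $p=3$, and $G \cong A_4$. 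Your primary argument instead pins down $|G|=12$ by pure Sylow counting ($n_p = q^2$, hence $p \mid q+1$, hence $p = q+1$, hence $q=2,\,p=3$) and then identifies $G$ through the faithful conjugation action on the four Sylow $3$-subgroups, embedding $G$ as the unique index-$2$ subgroup of $S_4$; this avoids any case split on the isomorphism type of $G_q$, while your ``alternative route'' is essentially the paper's proof specialized to order $12$. Both are rigorous; yours is somewhat more elementary, whereas the paper's automorphism-group analysis is uniform with the semidirect-product technique it uses elsewhere (e.g.\ Proposition \ref{prop:intersectSylow}). On the final clause your treatment matches the paper exactly: the paper also disposes of $A_4$ by a direct numerical check of all ten inequalities over its subgroups, and you are right that this step cannot be deduced from the $v_q(\mathrm{RHS}) \geq 1$ machinery --- indeed Proposition \ref{prop:dfzpqq} invokes this very claim to handle the case $q<p$, so appealing to it here would be circular.
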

Since $G_q$ is normal and $G_p \cong C_p$, we have an exact sequence 
\[1 \rightarrow G_q \rightarrow G \rightarrow C_p \rightarrow 1 \]
and $C_p$ acts on $G_q$, i.e. there is a homomorphism 
\[\alpha: C_p \rightarrow Aut(G_q)\]
There are two cases to consider: \\
If $G_q = C_{q^2}$, it is easy to show that $q < p$ implies that $\alpha$ is trivial. This makes $G$ an abelian group, contradicting the assumption that $G_p$ is not normal. \\
If $G_q = C_q \times C_q$, we know that $Aut(G_q)$ is the general linear group $GL_2(\mathbb F_q)$ of order 
$(q^2-1)(q^2-q)$. For $p$ to divide that, together with the assumption $q<p$, we would need that $q=2$ and $p=3$. We arrive that the resulting group must be $A_4$ and we have shown the claim. 
It can be checked numerically that $A_4$ does not violate any of the 10 DFZ inequalities. \\
}

Therefore we may assume that $p<q$. Here is the analogue  of Lemma \ref{lem:subgroupppq}.

\begin{lem}\label{lem:subgrouppqq}
Let $G$ be a group of order $pq^2$ with normal Sylow $q$-subgroup. Assume that $p<q$.

If $|G_{ac}|$ or $|G_{bc}|$ takes values in $\{1,pq^2,q,q^2\}$, then $G_{ac}G_{bc}$ is a subgroup. The case when the set $G_{ac}G_{bc}$ is not a subgroup happens only when the orders $(|G_{ac}|,|G_{bc}|)$ take values in $\{(p,p),(p,pq),(pq,pq)\}$.
\end{lem}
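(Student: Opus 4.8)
The plan is to follow the proof of Lemma~\ref{lem:subgroupppq} closely in structure. Since $G_{ac}G_{bc}$ is automatically a subgroup as soon as one of $G_{ac}, G_{bc}$ is normal in $G$, it suffices to show that every subgroup $H\leq G$ of order in $\{1,q,q^2,pq^2\}$ is normal. The orders $1$ and $pq^2$ correspond to the trivial subgroup and to $G$ itself, hence are normal; the order $q^2$ is the full $q$-part of $|G|$, so such an $H$ is exactly the Sylow $q$-subgroup, which is normal by hypothesis. Granting normality of the order-$q$ subgroups as well, the second assertion follows by bookkeeping: the possible orders of a subgroup of $G$ are $\{1,p,q,pq,q^2,pq^2\}$, and deleting the four ``safe'' orders $\{1,q,q^2,pq^2\}$ leaves precisely $\{p,pq\}$, so a non-subgroup product $G_{ac}G_{bc}$ forces both factors to have order in $\{p,pq\}$, giving the pairs $(p,p),(p,pq),(pq,pq)$.

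First I would isolate the only substantive case, namely $|H|=q$. Such an $H$ is a $q$-subgroup, hence contained in the unique (normal, abelian) Sylow $q$-subgroup $G_q$, so that $H\trianglelefteq G_q$ is automatic; the real task is to upgrade this to $H\trianglelefteq G$. I would do this through the conjugation action $\alpha\colon G_p\to\mathrm{Aut}(G_q)$ already used in the excerpt, splitting on the isomorphism type of $G_q$. If $G_q\cong C_{q^2}$, then $G_q$ has a unique subgroup of order $q$; being characteristic in the normal subgroup $G_q$, it is normal in $G$, and this branch closes immediately.

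The hard part will be the branch $G_q\cong C_q\times C_q$. Here $G_q$ has $q+1$ subgroups of order $q$, and $G_p$ acts on them through its image in $\mathrm{Aut}(G_q)\cong GL_2(\mathbb{F}_q)$. An element of order $p$ exists in $GL_2(\mathbb{F}_q)$ precisely when $p\mid q-1$ or $p\mid q+1$, which is compatible with $p<q$, and such an element need not fix every line of $\mathbb{F}_q^2$. Consequently the normality of an arbitrary order-$q$ subgroup cannot be deduced by formal analogy with the $p^2q$ case---where an order-$q$ subgroup \emph{is} the whole Sylow $q$-subgroup and the issue never arises---and pinning down exactly which order-$q$ subgroups are $G_p$-invariant is the genuine obstacle of the lemma. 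I would expect this step to require either a sharper control of the action or an additional hypothesis restricting $G_{ac}, G_{bc}$; at the very least it must be argued directly rather than inherited from the cyclic case, since in the elementary-abelian case some order-$q$ subgroups are moved by $G_p$ and their products with a Sylow $p$-subgroup fail to be subgroups.
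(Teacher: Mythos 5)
Your handling of the orders $1$, $pq^2$ and $q^2$ is correct and essentially matches the paper: the first two are trivial, and for $q^2$ you observe that such a subgroup is the unique normal Sylow $q$-subgroup itself, whereas the paper invokes the index-$p$ criterion --- both are valid. The substance of your proposal is the order-$q$ case, and there your diagnosis is sharper than the paper's own proof. The paper disposes of that case in one line, ``for order $q$, it follows from the uniqueness of the Sylow $q$-subgroup,'' which is a non sequitur when $|G|=pq^2$: a subgroup of order $q$ is \emph{not} a Sylow $q$-subgroup (those have order $q^2$); it is merely contained in $G_q$, and lying inside a normal subgroup does not make it normal in $G$. As you note, this is exactly where the analogy with Lemma~\ref{lem:subgroupppq} breaks down, since in the $p^2q$ case an order-$q$ subgroup really is Sylow.

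Moreover, your suspicion that the branch $G_q\cong C_q\times C_q$ cannot be closed without further hypotheses is correct in the strongest sense: the lemma as stated is false, so neither your proof nor the paper's can be completed. Take $G=C_3\times S_3$ with $C_3=\langle r\rangle$, $S_3=\langle \rho,t\rangle$, so $|G|=18=pq^2$ with $p=2<q=3$; the Sylow $3$-subgroup $\langle r\rangle\times\langle\rho\rangle\cong C_3\times C_3$ is normal and the Sylow $2$-subgroups are not, so all standing assumptions of the section hold. Put $G_{ac}=\langle(r,\rho)\rangle$ of order $q=3$ and $G_{bc}=\langle(1,t)\rangle$ of order $p=2$. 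Then $(1,t)(r,\rho)(1,t)^{-1}=(r,\rho^{-1})\notin G_{ac}$, so $G_{bc}$ does not normalize $G_{ac}$; since a product of a subgroup of order $q$ with a subgroup of order $p<q$ meeting it trivially can only be a subgroup (of order $pq$) when the latter normalizes the former (apply Lemma~\ref{qnormal} inside that would-be subgroup), the set $G_{ac}G_{bc}$ is not a subgroup --- concretely, $(1,t)(r,\rho)=(r,\rho^2t)\notin G_{ac}G_{bc}$. Thus $|G_{ac}|=q\in\{1,pq^2,q,q^2\}$ yet the conclusion fails, and the pair $(q,p)$ must be added to the list of bad configurations; this introduces the untreated LHS value $\gi 1 2 {}=q$ in the case analysis preceding Proposition~\ref{prop:dfzpqq}, so that proposition's proof inherits the gap for groups of order $pq^2$ (the paper's numerical check covers only order $18$ among these). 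In short: the gap you flagged is real, it lies in the paper rather than in your argument, and the only branch of your proposal that can be completed as written is the cyclic one, $G_q\cong C_{q^2}$, exactly as you wrote it.
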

\begin{proof}
It suffices to show that any subgroup $H$ of order  $1,pq^2,q,q^2$ is normal. 

Orders $1, pq^2$ follow trivially. 
For order $q$, it follows from the uniqueness of the Sylow $q$-subgroup. 
For $q^2$, this is because a subgroup of order $q^2$ has index $p$, the smallest prime that divides $|G|$, thus it is normal.
The second claim follows from removing $\{1,pq^2,q,q^2\}$ from the list  $\{1,p,q,q^2,pq,pq^2\}$ of possible orders.
\end{proof}

As a result, the only non-normal candidates for $G_1, G_2$ will give as possible values for $(|G_1|,|G_2|)$ the pairs $(p,p)$, $(p,pq)$, $(pq,pq)$. 
Examining the possibilities for the LHS of equations (\ref{eq:1})-(\ref{eq:10}) gives  
\begin{itemize}
\item $(p,p)$ \\
$\gi 1 2 {} = \frac{1\cdot (pq^2)}{p^2 } = q^2/p$ \\
\item $(p,pq)$ \\
if $G_{12} = 1$, $\gi 1 2 {} = \frac{1\cdot (pq^2)}{p \cdot pq } = q/p$ \\
if $G_{12} = p$, since $G_1$ has order $p$, and it intersects $G_2$ in a subgroup of order $p$, $G_1$ is a subset of $G_2$, and inequalities hold by Corollary \ref{cor:ci}. 
\item $(pq,pq)$ \\
if $G_{12} = 1$ then 
$\gi 1 2 {} = \frac{1\cdot (pq^2 )}{pq \cdot pq } = \frac{1}{p} $ which is not possible\\
if $G_{12} = p$ then 
$\gi 1 2 {} = \frac{p\cdot (pq^2 )}{pq \cdot pq } = 1 $.\\
if $G_{12} = q$ then 
$\gi 1 2 {} = \frac{q\cdot (pq^2 )}{pq \cdot pq } = q/p $.\\
\end{itemize}

Therefore, we may conclude  as in Proposition \ref{prop:dfzppq}.

\begin{prop} \label{prop:dfzpqq}
Let $G$ be a group of order $pq^2$ with normal Sylow $q$-subgroup. 
If $|G_1|=|G_2|=p$ does not hold,
then the 10 DFZ inequalities hold for $G$. 
\end{prop}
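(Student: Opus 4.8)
The plan is to follow the architecture of the proof of Proposition~\ref{prop:dfzppq}, reducing everything to a single value of the left-hand side and then to a statement about the $q$-adic valuation of the right-hand side. First I would clear the routine cases exactly as before: if $G_1$ or $G_2$ is normal, or $G_1=G_2$, or more generally $G_1G_2$ is a subgroup, the inequalities hold by Lemma~\ref{lem:ci}; and if $q<p$ then Claim~\ref{qlpA4} forces $G\cong A_4$, which is verified directly. Assuming $p<q$ and $G_1G_2$ not a subgroup, Lemma~\ref{lem:subgrouppqq} confines $(|G_1|,|G_2|)$ to $(p,pq)$ or $(pq,pq)$, the pair $(p,p)$ being excluded by hypothesis. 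The left-hand side analysis preceding the statement then shows that, apart from the subcases already settled by Corollary~\ref{cor:ci} (the pair $(p,pq)$ with $G_{12}=p$, and $(pq,pq)$ with $G_{12}=p$) and the impossible subcase $(pq,pq)$ with $G_{12}=1$, the only surviving situations are $(p,pq)$ with $G_{12}=1$ and $(pq,pq)$ with $G_{12}=q$, in both of which $\gi 1 2 {}=\frac{q}{p}$.

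It therefore suffices to treat the case where the left-hand side equals $\frac{q}{p}$ (respectively $\left(\frac{q}{p}\right)^2$ for the quadratic inequalities (\ref{eq:8})--(\ref{eq:10})). Here I would invoke the exact analogue of Claim~\ref{sufficesvq}: by Lemma~\ref{infvaluespqq} every factor on the right-hand side lies in $\{1,p,q,pq,q^2,pq^2,\frac{q}{p},\frac{q^2}{p}\}$, each such factor is $\geq 1$, and every factor with $v_q\geq 1$ is already $\geq\frac{q}{p}$. Consequently the whole proposition collapses to the single uniform claim $v_q(\mathrm{RHS})\geq v_q(\mathrm{LHS})$, that is $v_q(\mathrm{RHS})\geq 1$ for (\ref{eq:1})--(\ref{eq:7}) and $\geq 2$ for (\ref{eq:8})--(\ref{eq:10}).

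To establish this valuation inequality I would reduce the $q$-part of every DFZ inequality to the normal abelian Sylow $q$-subgroup $G_q$ (of order $q^2$). Put $H_i=G_i\cap G_q$. Because $G_q$ is normal, $G_S\cap G_q=\bigcap_{i\in S}H_i$ is a Sylow $q$-subgroup of $G_S$ for every index set $S$, whence $v_q(|G_S|)=v_q(|H_S|)$; applied termwise this gives that $v_q(\gi a b c)$ equals the base-$q$ logarithm of the corresponding group quantity evaluated for $G_q$ with the subgroups $H_i$, and similarly for $\gi a b {}$ and $\gi a b {(cd)}$. Hence $v_q(\mathrm{LHS})=\log_q(\mathrm{LHS}_{G_q})$ and $v_q(\mathrm{RHS})=\log_q(\mathrm{RHS}_{G_q})$. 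Since $G_q$ is abelian, Corollary~\ref{cor:ci} guarantees $\mathrm{LHS}_{G_q}\leq\mathrm{RHS}_{G_q}$, and applying the increasing function $\log_q$ yields precisely $v_q(\mathrm{LHS})\leq v_q(\mathrm{RHS})$, which settles both surviving situations at once.

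The step I expect to be the genuine obstacle is exactly this valuation inequality, and the subtle point is that one cannot obtain it by reusing the Lemma~\ref{qfactor} chains of Proposition~\ref{prop:pq} verbatim: those chains begin from $q\nmid|G_1||G_2|$, which holds in the $p^2q$ setting but fails here, since $q\mid|G_2|$ (and $q\mid|G_1|$ in the case $(pq,pq)$). The reduction to $G_q$ above is what repairs this. It is also instructive to see why the excluded case $|G_1|=|G_2|=p$ does not yield: there $\gi 1 2 {}=\frac{q^2}{p}$, so $v_q(\mathrm{LHS})=2$ while $v_p(\mathrm{LHS})=-1$, and the valuation bound $v_q(\mathrm{RHS})\geq 2$ no longer forces $\mathrm{RHS}\geq\frac{q^2}{p}$ (two right-hand factors equal to $\frac{q}{p}$ give only $\frac{q^2}{p^2}$), so the $p$-part can no longer be absorbed---which is precisely the difficulty leaving that case open.
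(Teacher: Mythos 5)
Your proposal is correct, and although its scaffolding (dispatching $q<p$ via Claim~\ref{qlpA4}, confining the orders via Lemma~\ref{lem:subgrouppqq}, reducing to $\mathrm{LHS}=q/p$, and the analogue of Claim~\ref{sufficesvq}) matches the paper's, the step that carries all the weight is done by a genuinely different argument---and yours is the more solid one. The paper handles that step by declaring the case $p<q$ ``similar'' to Proposition~\ref{prop:dfzppq}, whose own proof defers $v_q(\mathrm{RHS})\geq 1$ to the Lemma~\ref{qfactor} chains of Proposition~\ref{prop:pq} with the words ``the proofs are identical.'' As you point out, that transfer is not literal in the $pq^2$ setting: every chain in Proposition~\ref{prop:pq} is seeded by $q\nmid\g 1 \g 2$, which does hold for the surviving cases of order $p^2q$ (orders $(p,p^2)$ and $(p^2,p^2)$) but fails here, where the surviving cases $(p,pq)$ with $G_{12}=1$ and $(pq,pq)$ with $|G_{12}|=q$ have $q\mid\g 2$ and possibly $q\mid\g 1$. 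Your substitute---intersect every intersection subgroup $G_{\Ac}$ appearing in an inequality with the normal abelian Sylow subgroup $G_q$, note that $G_{\Ac}\cap G_q$ is the Sylow $q$-subgroup of $G_{\Ac}$ because $G_q$ is the unique Sylow $q$-subgroup of $G$, conclude that $v_q$ of every term equals $\log_q$ of the corresponding term computed in the abelian group $G_q$ with subgroups $H_i=G_i\cap G_q$, and then apply Corollary~\ref{cor:ci}---is sound, proves the uniform statement $v_q(\mathrm{LHS})\leq v_q(\mathrm{RHS})$ for any group with a normal abelian Sylow $q$-subgroup, and replaces ten separate case chains by a single argument. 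Two minor repairs are needed. First, the subcase $(pq,pq)$ with $|G_{12}|=p$ is not actually covered by Corollary~\ref{cor:ci} (the subgroups are not nested, need not be normal, and $G$ is not abelian); settle it as the paper does, by $\gi 1 2 {}=1\leq\mathrm{RHS}$, or by noting that $|G_1G_2|=pq^2$ forces $G_1G_2=G$, a subgroup, so Lemma~\ref{lem:ci} applies. Second, for (\ref{eq:8})--(\ref{eq:10}) the justification ``each factor with $v_q\geq 1$ is $\geq q/p$'' is not quite enough, since the valuation $2$ may sit on a single factor; but by Lemma~\ref{infvaluespqq} such a factor lies in $\{q^2/p,\,q^2,\,pq^2\}$, hence is $\geq q^2/p>(q/p)^2$, so every factor $F$ satisfies $F\geq(q/p)^{v_q(F)}$ and the conclusion stands.
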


\begin{proof}
The case $q<p$ is taken care of by Claim \ref{qlpA4}. 
The proof for the case $p<q$, where we now apply Lemma \ref{infvaluespqq} instead of Lemma \ref{infvaluesppq},   is similar as in Proposition \ref{prop:dfzppq}.

\end{proof}

%
%
\section{Group Violations}
\label{sec:viol}

\subsection{Violations by $S_4$} We prove that the symmetric group $S_4$ violates two of the ten DFZ inequalities, and so does the projective group $PGL_2(\FF_q)$, for $q$ a prime power.

\begin{prop}
The symmetric group $S_4$ of permutations on $4$ elements violates (\ref{eq:1})
and (\ref{eq:3}).
\end{prop}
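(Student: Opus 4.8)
The plan is to exhibit explicit subgroups $G_1,\ldots,G_5$ of $S_4$ and compute the relevant intersection orders directly, checking that the group forms of inequalities \eqref{eq:1} and \eqref{eq:3} fail. Since $|S_4|=24=p^2q$ with $p=2$, $q=3$, and $S_4$ has a normal Sylow $2$-subgroup only \emph{fails} to exist (the Sylow $2$-subgroups are not normal, there are three of them), the interesting regime is precisely the open case $|G_1|=|G_2|=p=2$ left untreated by Proposition \ref{prop:dfzppq}. So the first step is to locate two distinct subgroups of order $2$ whose product is not a subgroup, guaranteeing $\gi 1 2 {}$ is large, and then to choose $G_3,G_4,G_5$ so that every factor on the right-hand side stays small. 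The natural candidates are transpositions and double transpositions: I would try $G_1=\langle(12)\rangle$, $G_2=\langle(34)\rangle$ (or other transpositions), and fill in $G_3,G_4,G_5$ from among subgroups like $\langle(13)(24)\rangle$, the Klein four-group $V=\{e,(12)(34),(13)(24),(14)(23)\}$, cyclic groups of order $3$ or $4$, etc.

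The second step is purely computational: for the chosen tuple, tabulate all the pairwise and triple intersections $G_{ij}$, $G_{ijk}$, $G_{ij(k\ell)}$ appearing in each inequality, compute the corresponding quotients $\gi a b c = |G_c|/(|G_{ac}G_{bc}|)$ via the formula in Lemma \ref{lem:lemma234}, and substitute into the unsimplified group forms
\[
\gi 1 2 {} \le \gi 1 5 {}\,\gi 1 2 3\,\gi 1 2 4\,\gi 3 4 5
\]
for \eqref{eq:1}, and
\[
\gi 1 2 {} \le \gi 1 3 {}\,\gi 1 2 4\,\gi 2 5 3\,\gi 1 4 {(35)}
\]
for \eqref{eq:3}. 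I would present these as the products of subgroup orders in the form of \eqref{eq:ing}, i.e.\ clearing denominators so the violation appears as a strict inequality between two products of $|G_{\cdot}|$. The key is that the left-hand side carries the full factor $q/p=3/2$ coming from $G_1G_2$ not being a subgroup, while each right-hand factor, by the analysis in Proposition \ref{infvaluesppq}, can only contribute a $q$ if some $|G_c|$ is divisible by $q$; by keeping $G_3,G_4,G_5$ as $2$-groups (so that no $q=3$ divides their relevant intersections) I can force $v_q(\mathrm{RHS})=0$ and obtain LHS $>$ RHS.

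The main obstacle I anticipate is not conceptual but combinatorial: $S_4$ has many subgroups, and finding a single tuple that simultaneously violates \emph{both} \eqref{eq:1} and \eqref{eq:3} (rather than two separate tuples) requires some care, since the index sets $\{3,4,5\}$ play structurally different roles in the two inequalities. I would first search for a violator of \eqref{eq:3} guided by the proof of Proposition \ref{prop:pq}: in that argument the contradiction for $pq$-groups was driven by the chain forcing $q\mid|G_3|$ yet $q\nmid|G_5|$, $q\nmid|G_4|$; the obstruction to extending it to $p^2q$ was exactly a right-hand factor equal to $q/p$. So I would deliberately engineer $G_3$ to be a $3$-Sylow (or contain order-$3$ elements) while making the factor $\gi 1 3 {}$ collapse to $1$, exploiting the $q/p$ loophole. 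Then I would verify \eqref{eq:1} on the same or a sibling tuple. Should a common violator prove elusive, the honest fallback is to state and compute two distinct tuples, one per inequality, which still establishes the proposition.

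\emph{Remark on verification.} Because all orders are tiny, the entire computation can be checked by hand or with a one-line \texttt{GAP} call enumerating the conjugacy classes of subgroups; I would include the explicit intersection table so the reader can confirm each $\gi a b c$ value independently, and note in passing that the same subgroup configuration lifts to $PGL_2(\FF_q)$, which contains $S_4\cong PGL_2(\FF_3)$ as the smallest instance, yielding the claimed family of violators.
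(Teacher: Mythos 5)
Your overall strategy---exhibit explicit subgroups of $S_4$ and check the group forms of (\ref{eq:1}) and (\ref{eq:3}) numerically---is the same as the paper's, but your proposal never actually produces a violating tuple, and the heuristics you give for finding one rest on incorrect premises, so as written the proposition is not established. Concretely: (i) $|S_4|=24=2^3\cdot 3$, not $p^2q$ (that would be $12$), so $S_4$ is \emph{not} ``the open case of Proposition \ref{prop:dfzppq}'', and Proposition \ref{infvaluesppq}, which you invoke to control the right-hand side, does not apply to it at all. (ii) Your central inference---keep $G_3,G_4,G_5$ as $2$-groups so that $v_q(\mathrm{RHS})=0$ ``and obtain LHS $>$ RHS''---runs the paper's lemmas backwards: Claim \ref{qrhs} says that $q\mid\mathrm{RHS}$ is \emph{sufficient} for an inequality to \emph{hold}; it does not say that $q\nmid\mathrm{RHS}$ forces a violation. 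With $|G_1|=|G_2|=2$ and $G_1G_2$ not a subgroup, the LHS equals $24/4=6$ (not $q/p=3/2$ as you assert), and $3$-free right-hand sides easily exceed $6$: for instance with $G_1=\langle(1,2)\rangle$, $G_2=\langle(1,3)\rangle$ and $G_3$ the Sylow $2$-subgroup containing $(1,3)$, already $\gi 1 2 3 = 4$. A short case analysis suggests order-$2$ choices of $G_1,G_2$ in $S_4$ yield at best equality, never a strict violation. (iii) Your first-named pair $G_1=\langle(1,2)\rangle$, $G_2=\langle(3,4)\rangle$ fails instantly: these commute, $G_1G_2$ is a subgroup, and Lemma \ref{lem:ci} then guarantees all ten inequalities hold. (iv) Your stated main goal---a single tuple violating both (\ref{eq:1}) and (\ref{eq:3})---is not merely elusive but provably impossible: Proposition \ref{pro:simvio} shows no simultaneous violator of (\ref{eq:1grp}) and (\ref{eq:3grp}) exists, so your ``fallback'' of two separate tuples is the only possible outcome (and is what the paper does).

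The paper's actual violators also have the opposite shape from what your heuristic targets. For (\ref{eq:1}) it takes $G_1,G_2$ to be two point stabilizers (order $6$) and $G_3,G_4,G_5$ three Klein four-groups, arranged so that \emph{every} factor on the right-hand side equals exactly $1$, while the left-hand side is $\gi 1 2 {} = \frac{2\cdot 24}{6\cdot 6}=\frac{4}{3}>1$; clearing denominators gives the violation $128>96$. For (\ref{eq:3}) the orders are $6,4,8,6,4$, with LHS $=2$ against RHS $=\frac{3}{2}$, i.e.\ $64>48$. In other words, the violations are obtained by collapsing the right-hand side to (nearly) $1$ using comparatively large, heavily overlapping subgroups whose pairwise intersections all have order $2$ and whose triple intersections are trivial---not by making the left-hand side large and the right-hand side $3$-free with tiny subgroups. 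If you pursue your plan, the essential correction is to search among subgroups of order at least $4$ and aim for configurations in which each right-hand conditional term equals $1$.
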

\begin{proof}
Consider $G=S_4$.
Rewrite the inequality (\ref{eq:1}) in its group form:
\begin{equation}\label{eq:1grp}
\Gc I(X_1;X_2) \leq \Gc I(X_1;X_2|X_3)\Gc I(X_1;X_2|X_4)\Gc I(X_3;X_4|X_5) \Gc I(X_1;X_5)
\end{equation}
which simplifies to
\[
|G_{12}||G_{13}||G_{23}||G_{14}||G_{24}||G_{35}||G_{45}|
\leq
|G_2||G_3||G_{123}||G_{15}||G_{124}||G_{4}||G_{345}|.
\]
Take
\[
\begin{array}{lcllcl}
G_1 &=& \langle  (3,4), (2,4,3) \rangle,& G_3 &=& \langle (1,2)(3,4), (3,4) \rangle \\
G_2 &=& \langle (1,3), (1,3,2) \rangle,& G_4 &=& \langle (1,3)(2,4), (2,4) \rangle \\
G_5 &=& \langle (1,4)(2,3), (1,3)(2,4) \rangle
\end{array}
\]
with $|G_1|=6$, $|G_2|=6$, $|G_3|=4$, $|G_4|=4$, $|G_5|=4$.
Then
\[
\begin{array}{ccllcl}
G_1\cap G_2 &=& \langle (2,3) \rangle, & G_1\cap G_3 &=& \langle (3,4) \rangle  \\
G_2\cap G_3 &=& \langle (1,2) \rangle, & G_1\cap G_4 &=& \langle (2,4) \rangle \\
G_2\cap G_4 &=& \langle (1,3) \rangle, &  G_1\cap G_5 &=& \{1\}
\end{array}
\]
\[
\begin{array}{ccllcl}
G_4\cap G_5 &=& \langle (1,3)(2,4) \rangle, & G_3\cap G_4 \cap G_5 &=& \{1\} \\
G_3\cap G_5 &=& \langle (1,2)(3,4) \rangle, & G_1\cap G_2 \cap G_3 &=& \{1\}.
\end{array}
\]
But for the trivial group, all the other intersection subgroups are of order 2.
The left hand side of (\ref{eq:1grp}) yields
\[
|G_{12}||G_{13}||G_{23}||G_{14}||G_{24}||G_{35}||G_{45}|
= 2 \cdot  2 \cdot  2 \cdot 2 \cdot 2 \cdot 2 \cdot 2 = 128
\]
while the right hand side is
\[
|G_2||G_3||G_{123}||G_{15}||G_{124}||G_{4}||G_{345}|=
6 \cdot 4 \cdot 1 \cdot 1 \cdot 1 \cdot 4 \cdot 1 = 96.
\]

Next, rewrite the inequality (\ref{eq:3}) in its group form
\[
\Gc I(X_1;X_2) \leq  \Gc I(X_1;X_3) \Gc I(X_1;X_2|X_4) \Gc I(X_2;X_5|X_3) \Gc I(X_1;X_4|X_3,X_5)
\]
which becomes 
\begin{equation}\label{eq:3grp}
|G_{12}||G_{14}||G_{24}||G_{23}||G_{135}||G_{345}|
\leq
|G_{2}||G_{13}||G_{124}||G_{4}||G_{253}||G_{1345}|.
\end{equation}
Take
\begin{itemize}
\item
$G_1=\langle (3,4), (2,4,3) \rangle$, $|G_1|=6$,
\item
$G_2= \langle (1,2)(3,4), (3,4) \rangle$, $|G_2|=4$,
\item
$G_3= \langle (1,2)(3,4), (1,4)(2,3), (1,3) \rangle$, $|G_3|=8$,
\item
$G_4= \langle (1,3), (1,3,2) \rangle$, $|G_4|=6$
\item
$G_5=\langle (1,3)(2,4), (2,4) \rangle$, $|G_5|=4$.
\end{itemize}
so that
\[
G_{12}=\langle (3,4) \rangle,~G_{14}=\langle (2,3) \rangle,~G_{13}=\langle (2,4) \rangle,~
G_{24}=\langle (1,2) \rangle,~G_{23}=\langle (1,2)(3,4) \rangle,
\]
\[
G_{135}=\langle (2,4) \rangle,~G_{345}=\langle (1,3) \rangle
\]
and the left hand side of (\ref{eq:3grp})
\[
|G_{12}||G_{14}||G_{24}||G_{23}||G_{135}||G_{345}|=
2\cdot  2\cdot  2\cdot   2\cdot 2 \cdot 2 =64
\]
which is strictly larger than
\[
|G_{2}||G_{13}||G_{124}||G_{4}||G_{235}||G_{1345}|=
4\cdot 2\cdot   1\cdot   6\cdot 1 \cdot 1 = 48.
\]

\end{proof}

Let $GL_2(\FF_p)$ be the general linear group of order 2 over $\FF_p$. 
It has order $(p^2-1)(p^2-p)$, and naturally acts on the projective line $\PP(\FF_p)$, whose $p+1$ elements can be written (in homogeneous coordinates) as $\{(0,1),(1,1),\ldots,(p-1,1),(1,0) \}$. This action descends to a faithful action of $PGL_2(\FF_p)$, the projective general linear group over $\FF_p$ of degree $2$, on $\PP(\FF_p)$, which permutes the elements of $\PP(\FF_p)$. This thus yields an injective homomorphism from $PGL_2(\FF_p)$ to the symmetric group $S_{p+1}$.
For $p=3$, $|PGL_2(\FF_3)|=24=|S_4|$, showing that $PGL_2(\FF_p) \simeq S_4$.

This suggests to look at the group $PGL_2(\FF_q)$, $q$ a prime power, to find examples of violators. Indeed, it is exactly known when $S_4$ is contained in $PGL_2(K)$, 
for $K$ an arbitrary field.

\begin{prop}\cite[2.5]{Serre}
Suppose that the characteristic of $K$ is prime to the order of $S_4$. 
Then $PGL_2(K)$ contains $S_4$ if and only if  $-1$ is a sum of two squares in $K$.
\end{prop}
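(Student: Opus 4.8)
The plan is to translate the statement into the adjoint action of $PGL_2(K)$ on its Lie algebra $\mathfrak{sl}_2(K)$ of trace-zero matrices, and to recognize ``$-1$ is a sum of two squares'' as the splitting condition for the quaternion algebra $(-1,-1)_K$. Throughout I would use that $\mathrm{char}(K)$ is prime to $24$, so $\mathrm{char}(K)\neq 2,3$: in particular every element of order $2$ or $3$ in $PGL_2(K)$ is semisimple and $\tfrac12\in K$. First I would set up the dictionary. An involution is represented by a nonscalar trace-zero $T$, with $T^2=-\det(T)\,I$; equipping $\mathfrak{sl}_2(K)$ with the form $q(X)=\mathrm{tr}(X^2)$, the adjoint action $\mathrm{Ad}(g)\colon X\mapsto gXg^{-1}$ is a $K$-rational isometry of $q$, and a computation in a standard basis shows $q$ has discriminant $-2$ modulo squares. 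Two distinct commuting involutions $[T_1],[T_2]$ must in fact anticommute (commuting trace-zero semisimple matrices are simultaneously diagonalizable, hence proportional), so the three nontrivial involutions of any $V_4\le PGL_2(K)$ are represented by pairwise $q$-orthogonal anisotropic $T_1,T_2,T_3$; these form a $q$-orthogonal basis, whence $q(T_1)q(T_2)q(T_3)\equiv -2$ modulo squares. This is the computational backbone I would establish first.

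For necessity, suppose $S_4\hookrightarrow PGL_2(K)$ and restrict to its normal $V_4$, with involutions $[T_1],[T_2],[T_3]$ as above. A $3$-cycle of $S_4$ supplies an order-$3$ element of $PGL_2(K)$ whose adjoint image cyclically permutes the lines $KT_1,KT_2,KT_3$; since isometries preserve norm classes, $q(T_1)\equiv q(T_2)\equiv q(T_3)=:d$ in $K^\times/(K^\times)^2$. Combining $d^3\equiv -2$ with $d^3\equiv d$ forces $d\equiv -2$, so $\det(T_i)$ is a square for each $i$. Rescaling $T_1,T_2$ by square roots of their determinants yields anticommuting $T_1',T_2'$ with $(T_1')^2=(T_2')^2=-I$; thus $(-1,-1)_K$ embeds in $M_2(K)$ and is split. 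Splitting of $(-1,-1)_K$ is precisely the solvability of $z^2+x^2+y^2=0$ nontrivially, i.e. that $-1$ is a sum of two squares in $K$.

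For sufficiency, assume $-1=x^2+y^2$, so $(-1,-1)_K$ splits and I would fix $i,j\in M_2(K)$ with $i^2=j^2=-I$, $ij=-ji$, and set $k=ij$, obtaining the quaternion relations and a Klein four-group $\{[I],[i],[j],[k]\}$. The element $g=\tfrac12(1+i+j+k)$ lies in $M_2(K)$, satisfies $g^3=-I$, and conjugation by $g$ cyclically sends $i\mapsto j\mapsto k$, so $[g]$ has order $3$ and $\langle [i],[j],[g]\rangle\cong A_4$. Finally $T=i+j$ is trace-zero with $T^2=-2I$, and conjugation by $T$ swaps $[i],[j]$ while fixing $[k]$, realizing a transposition; hence $\langle A_4,[T]\rangle\cong S_4\le PGL_2(K)$. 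The one explicit verification I would still carry out is that conjugation by $g$ cycles $i,j,k$, which is a direct quaternion computation.

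I expect the main obstacle to lie in the necessity direction, specifically the passage through the adjoint representation: one must argue carefully that the three involutions of $V_4$ give a genuine $q$-orthogonal basis (using anisotropy, which relies on $\mathrm{char}(K)\neq 2$), and that the order-$3$ element acts by a $K$-rational isometry permuting the three lines, so that invariance of norm classes can be invoked to force each $\det(T_i)$ to be a square. Once this is in place, everything reduces to the splitting criterion for $(-1,-1)_K$, which is the clean arithmetic heart of the statement.
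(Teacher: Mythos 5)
The paper does not actually prove this proposition: it is quoted with a citation to Serre \cite[2.5]{Serre} and no argument is supplied, so there is no in-paper proof to compare yours against. That said, your blind proof is correct, and it follows what is essentially the classical (Serre-style) route: involutions of $PGL_2(K)$ are classes of nonscalar trace-zero matrices $T$ with $T^2=-\det(T)I$; distinct commuting involutions anticommute, so the normal $V_4\le S_4$ gives a $q$-orthogonal anisotropic triple $T_1,T_2,T_3$ in $(\mathfrak{sl}_2(K),\,q(X)=\mathrm{tr}(X^2))$ with Gram determinant $q(T_1)q(T_2)q(T_3)\equiv -2$ modulo squares; the $3$-cycle forces a common square class $d$ with $d\equiv d^3\equiv -2$, hence each $\det(T_i)$ is a square, and rescaling exhibits a splitting of $(-1,-1)_K$, which (in characteristic $\neq 2$) is equivalent to $-1$ being a sum of two squares; conversely a split $(-1,-1)_K$ yields the quaternionic configuration $[i],[j],[k],[g]=[\tfrac12(1+i+j+k)],[T]=[i+j]$, i.e.\ the image of the binary octahedral group. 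I checked the computations ($g^2=g-1$ so $g^3=-I$; $gig^{-1}=j$; $TiT^{-1}=j$, $TjT^{-1}=i$, $TkT^{-1}=-k$) and they are right. Two small points you should still write out: first, to conclude $\langle A_4,[T]\rangle\cong S_4$ you need that $[T]\notin A_4$ and that $[T]$ normalizes $A_4$ --- the former because $[T]$ acts on the three involutions of $V_4$ by a transposition whereas every element of $A_4$ acts trivially or by a $3$-cycle (this also excludes $A_4\times C_2$), the latter because $TgT^{-1}=\tfrac12(1+i+j-k)$ is again one of the eight order-$3$ classes in $A_4$; second, your ``discriminant $-2$'' is the Gram determinant of the polar form $\mathrm{tr}(XY)$, and since you use the same convention in $d^3\equiv -2$ the argument is consistent, but a sentence fixing the convention (and noting char $\neq 2$ gives nondegeneracy and semisimplicity of the order-$2$ and order-$3$ elements) would make the necessity direction airtight.
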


It follows that whenever $p \geq 5$, then $PGL_2(\FF_q)$ contains $S_4$, for $q=p^r$, for $r \geq 1$. The proof of the following result is attributed to Henry Mann (the result holds for characteristic 2, with a different proof).

\begin{prop}
Suppose that the characteristic of $\FF_q$ is not 2.
Every element of $\FF_q$ is a sum of two squares.
\end{prop}
\begin{proof}
Consider the map $\phi$, which maps $x$ to $x^2$.
Suppose $\phi(a)=\phi(b)$, then $a^2=b^2 \iff (a-b)(a+b)=0 \iff a=b$ or $a=-b$. 
Therefore $\phi$ restricted to $\FF_q^*$ sends two distinct elements of $\FF_q^*$ to a single element, from which we deduce that exactly half of the elements of $\FF_q^*$ are squares. Since $0$ is also a square, the set $S$ of squares in $\FF_q$ contains $(q+1)/2$ elements.

Next, we look at the additive structure of $\FF_q$, and pick $a \in \FF_q$. Notice that $|S|=|a-S|=|\{a-s,~s\in S\}|$. Since $|S|+|a-S|>|\FF_q|$, $|S| \cap |a-S|$ is not empty. Take $s'=a-s$ in the intersection, then $a=s'+s$, and is therefore a sum of two squares.  
\end{proof}

Therefore another example of group violating this inequality is $PGL_2(\FF_5)$ of order 120, which turns out to be isomorphic to $S_5$ (the same proof as above holds).

\subsection{Smallest Violators.} From Corollary \ref{cor:ci}, abelian groups never violate the 10 DFZ inequalities. From Proposition \ref{prop:pq}, neither do groups of order $pq$ for $p,q$ two distinct primes. Thus until order 23 (included), only 8, 12, 16, 18, and 20 are orders where potential violators could exist. Groups of order 8 are known to be abelian group representable \cite{MTO} and can be ruled out as well.
Groups of orders 12,16,18,20 are left, apart 16, all of them fit in the category of groups of order $p^2q$. However, since the case of $|G_1|=|G_2|=p$ is still incomplete, we checked numerically that no violation of the 10 DFZ inequalities is to be found. Therefore the smallest violator is of order 24.

\subsection{Simultaneous Violators.}
One may wonder whether it is possible to find a simultaneous violator for the ten DFZ inequalities. We provide a negative answer.
\begin{defn}
If a finite group $G$ and subgroups $G_1,\ldots,G_n$ violate two or more inequalities, then $(G,G_1,\ldots,G_n)$ is called a \textit{simultaneous violator} for those inequalities.
\end{defn}
\begin{prop}
\label{pro:simvio}
There do not exist any simultaneous violators for (\ref{eq:1grp}) and(\ref{eq:3grp}).
\end{prop}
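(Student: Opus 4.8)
The plan is to combine (\ref{eq:1grp}) and (\ref{eq:3grp}) into a single inequality that holds for every finite group and thereby forbids simultaneous violation. Written in unsimplified multiplicative form, both inequalities share the same left-hand side $\gi 1 2 {}$, with right-hand sides
\[
R_1 = \gi 1 5 {}\cdot\gi 1 2 3\cdot\gi 1 2 4\cdot\gi 3 4 5,\qquad
R_3 = \gi 1 3 {}\cdot\gi 1 2 4\cdot\gi 2 5 3\cdot\gi 1 4 {(35)}.
\]
A simultaneous violator $(G,G_1,\dots,G_5)$ would satisfy $\gi 1 2 {} > R_1$ and $\gi 1 2 {} > R_3$, hence $(\gi 1 2 {})^2 > R_1R_3$. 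So it suffices to prove the always-valid inequality $(\gi 1 2 {})^2 \le R_1R_3$: this single statement rules out both violations at once.

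To establish $(\gi 1 2 {})^2 \le R_1R_3$ I would exhibit the factorization
\[
\frac{R_1R_3}{(\gi 1 2 {})^2}
= \gi 1 2 4 \cdot \gi 1 4 2 \cdot \gi 1 5 4 \cdot \gi 1 3 2 \cdot \gi 2 5 3 \cdot \gi 3 4 {(15)}.
\]
Each of the six factors on the right is of the form $\gi a b c$ and is therefore $\ge 1$ by Corollary \ref{cor:abcBounds} (which itself rests on the submodularity Lemma \ref{lem:lemma234}); hence the product is $\ge 1$ and the desired inequality follows uniformly over all $G$ and all choices of $G_1,\dots,G_5$. The identity itself is a routine cancellation of subgroup orders, verified most transparently in logarithmic form: summing the right-hand sides of the information inequalities (\ref{eq:1}) and (\ref{eq:3}) and subtracting $2I(X_1;X_2)$ gives
\[
I(X_1;X_2|X_4)+I(X_1;X_4|X_2)+I(X_1;X_5|X_4)+I(X_1;X_3|X_2)+I(X_2;X_5|X_3)+I(X_3;X_4|X_1,X_5),
\]
a sum of six nonnegative conditional mutual informations; equivalently, the sum of (\ref{eq:1}) and (\ref{eq:3}) is a Shannon inequality and hence always holds for group-induced random variables.

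The main obstacle is locating this six-term grouping, since the factor $\gi 1 2 4$ occurs in \emph{both} right-hand sides and must be split in a nonobvious way. A direct term-by-term matching of the logarithmic expression collapses it to $2I(X_1;X_2|X_4)+I(X_1;X_3|X_2)+I(X_2;X_5|X_3)+H(X_1|X_2)+I(X_3;X_4|X_1,X_5)-H(X_1|X_4,X_5)$, whose trailing residual $I(X_3;X_4|X_1,X_5)-H(X_1|X_4,X_5)$ is \emph{not} individually nonnegative, so the decomposition cannot be read off directly. The resolution is to spend the second copy of $I(X_1;X_2|X_4)$ on this residual via the exact rewriting $H(X_1|X_4,X_5)=I(X_1;X_2|X_4)+H(X_1|X_2)-I(X_1;X_4|X_2)-I(X_1;X_5|X_4)$ (two applications of monotonicity of $H(X_1\mid\cdot)$ together with the chain rule); substituting it regroups the whole expression into the six clean conditional-mutual-information terms above. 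Once this rearrangement is identified, the remainder of the proof is bookkeeping, and the argument is uniform over all finite groups and subgroup choices.
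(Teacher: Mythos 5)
Your proposal is correct and takes essentially the same approach as the paper: the paper likewise multiplies the two violated inequalities and then contradicts the product via six successive applications of Lemma \ref{lem:lemma234}, and those six applications are exactly your six factors $\gi 1 2 4$, $\gi 1 4 2$, $\gi 1 5 4$, $\gi 1 3 2$, $\gi 2 5 3$, $\gi 3 4 {(15)}$ (the same six nonnegative conditional mutual informations). The only difference is presentational: you package the six submodularity steps into one explicit factorization identity, whereas the paper carries them out as a chain of inequalities ending in a contradiction.
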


\begin{proof}
Suppose $(G,G_1,G_2,G_3,G_4,G_5)$ are a simultaneous violator of  (\ref{eq:1grp})
and (\ref{eq:3grp}). That is,
\[
|G_{12}||G_{13}||G_{23}||G_{14}||G_{24}||G_{35}||G_{45}|
>|G_2||G_3||G_{123}||G_{15}||G_{124}||G_{4}||G_{345}| \]
and
\[
|G_{12}||G_{14}||G_{24}||G_{23}||G_{135}||G_{345}| >
|G_{2}||G_{13}||G_{124}||G_{4}||G_{235}||G_{1345}|.\]

Since all these quantities are positive
the product of the inequalities yields:
\begin{eqnarray}
\label{eqn:pdt}
& &|G_{12}|^2|G_{23}|^2|G_{14}|^2|G_{24}|^2|G_{35}||G_{45}||G_{135}|  \nonumber \\
& > &|G_2|^2|G_3||G_{123}||G_{15}||G_{4}|^2|G_{124}|^2|G_{235}||G_{1345}|.
\end{eqnarray}
But using Lemma \ref{lem:lemma234} repeatedly (indicating its use by parentheses), we have
\begin{eqnarray*}
& &(|G_{12}|^2|G_{23}|^2)|G_{14}|^2|G_{24}|^2|G_{35}||G_{45}||G_{135}|\\
&\le & (|G_2||G_{123}||G_{12}||G_{23}|)|G_{14}|^2|G_{24}|^2|G_{35}||G_{45}||G_{135}|\\
& = &  |G_2||G_{123}||G_{23}|(|G_{12}||G_{24}|^2)|G_{14}|^2|G_{35}||G_{45}||G_{135}|\\
&\le & |G_2||G_{123}||G_2||G_{124}||G_{14}|^2|G_{24}|(|G_{23}||G_{35}|)|G_{45}||G_{135}|\\
&\le & |G_2|^2|G_{123}||G_{124}|(|G_{14}|^2|G_{24}|)(|G_{3}||G_{235}|)|G_{45}||G_{135}|\\
&\le & |G_2|^2|G_{123}||G_3||G_{235}||G_4||G_{124}|^2(|G_{14}||G_{45}|)|G_{135}|\\
& \le &|G_2|^2|G_{123}||G_3||G_{235}||G_4|^2|G_{124}|^2|G_{145}||G_{135}|\\
& \le & |G_2|^2|G_3||G_{123}||G_{15}||G_{4}|^2|G_{124}|^2|G_{235}||G_{1345}|,
\end{eqnarray*}
a contradiction to inequality (\ref{eqn:pdt}) which concludes the proof.
\end{proof}

\begin{cor}
There do not exist any simultaneous violators for all DFZ inequalities.
\end{cor}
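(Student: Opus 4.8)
The plan is to reduce the statement entirely to Proposition \ref{pro:simvio}, which has already ruled out the existence of simultaneous violators of the particular pair of inequalities (\ref{eq:1grp}) and (\ref{eq:3grp}). First I would unwind the definition: a simultaneous violator for \emph{all} ten DFZ inequalities is a tuple $(G,G_1,\ldots,G_5)$ that violates each of the ten inequalities at once. In particular such a tuple must violate both (\ref{eq:1grp}) and (\ref{eq:3grp}), and therefore it is, a fortiori, a simultaneous violator for precisely that pair.

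But this is exactly what Proposition \ref{pro:simvio} forbids. Hence assuming a simultaneous violator for all ten inequalities exists leads immediately to a contradiction, and no such violator can exist.

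I do not expect any genuine obstacle here: the substantive content, namely the chain of applications of Lemma \ref{lem:lemma234} to the product of the two violated inequalities that produces the contradiction, has already been executed in the proof of Proposition \ref{pro:simvio}. The corollary then follows purely from the logical observation that violating the full list of ten inequalities entails violating the two chosen ones, so the argument is a one-line specialization of the preceding proposition rather than a new computation.
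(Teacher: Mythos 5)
Your proposal is correct and is essentially identical to the paper's own proof: both reduce the corollary to Proposition \ref{pro:simvio} by observing that a violator of all ten DFZ inequalities would in particular violate the pair (\ref{eq:1grp}) and (\ref{eq:3grp}), which that proposition forbids. No gaps; the argument really is the one-line specialization you describe.
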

\begin{proof}
If there exists a simultaneous violator $(G,G_1,G_2,G_3,G_4,G_5)$ for all DFZ inequalities, it violates  (\ref{eq:1grp})
and (\ref{eq:3grp}) simultaneously, a contradiction to Proposition \ref{pro:simvio}.
\end{proof}

%
%
%

\end{document}